\documentclass[notitlepage,a4,10.5pt]{article}

\usepackage{graphicx}

\usepackage{amsmath}%
\usepackage{amsfonts}%
\usepackage{amssymb}%
\usepackage{mathrsfs}
\usepackage{amsthm}

\usepackage{authblk}
\usepackage[margin=2cm]{geometry}


\newcommand{\mc}{\mathcal}
\newcommand{\cp}{\times}

\newcommand{\bol}{\boldsymbol}

\newcommand{\abs}[1]{\left\lvert{#1}\right\rvert}
\newcommand{\w}{\wedge}
\newcommand{\lr}[1]{\left({#1}\right)}

\newcommand{\mf}{\mathfrak}
\newcommand{\p}{\partial}

\newcommand{\ti}[1]{\textit{#1}}
\newcommand{\tb}[1]{\textbf{#1}}

\newcommand{\pb}[2]{\left\{ {#1} , {#2} \right\}}
\newcommand{\db}[2]{\left[ {#1} , {#2} \right]}
\newcommand{\mb}[3]{\pb{{#1}}{{#2}}+\db{{#1}}{{#3}}}
\newcommand{\car}{\circlearrowright}

\newtheorem{mydef}{\textit{Def}}
\newtheorem{theorem}{\textit{Theorem}}





\begin{document}
\title{Dissipative Brackets for the Fokker-Planck Equation \\in Hamiltonian Systems and Characterization of Metriplectic Manifolds}
\author[1]{Naoki Sato} 
\affil[1]{Research Institute for Mathematical Sciences, \protect\\ Kyoto University, Kyoto 606-8502, Japan \protect\\ Email: sato@kurims.kyoto-u.ac.jp}
\date{\today}
\setcounter{Maxaffil}{0}
\renewcommand\Affilfont{\itshape\small}

\maketitle

\begin{abstract}
It is shown that the Fokker-Planck equation describing diffusion 
processes in noncanonical Hamiltonian systems exhibits a metriplectic structure, 
i.e. an algebraic bracket formalism that generates the equation   
in consistency with the thermodynamic principles  
of energy conservation and entropy growth. 
First, a microscopic metriplectic bracket is derived for the 
stochastic equations of motion that characterize the random walk of 
the elements constituting the statistical ensemble.
Such bracket is fully determined by the Poisson operator 
generating the Hamiltonian dynamics of an isolated (unperturbed) particle.
Then, the macroscopic metriplectic bracket associated with the evolution
of the distribution function of the ensemble is induced from
the microscopic metriplectic bracket. Similarly, the macroscopic Casimir invariants are inherited from microscopic dynamics.  
The theory is applied to construct the Fokker-Planck equation of an infinite dimensional Hamiltonian system, the Charney-Hasegawa-Mima equation.  
Finally, the canonical form of the symmetric (dissipative) part of the metriplectic bracket 
is identified in terms of a `canonical metric tensor' corresponding to an  Euclidean metric tensor on the symplectic leaves foliated by the Casimir invariants.
\end{abstract}

\section{Introduction}
The purpose of the present paper is to show that, 
given a general Hamiltonian system, 
there exists a bracket formalism that generates
the time evolution of the distribution function 
of the associated statistical ensemble
according to a Fokker-Planck equation that is consistent 
with the first and second
laws of thermodynamics.   

The Hamiltonian formulation of dynamics 
characterizes those systems that preserve energy
throughout the phase space.
The phase space is assigned by the Poisson bracket,
which determines the equations of motion by acting
on the dynamical variable and the Hamiltonian function. 
In general, a Hamiltonian system occurs in noncanonical Hamiltonian form, 
meaning that the Poisson bracket is not expressed
in terms of canonically paired variables \cite{Morrison82, Morrison82_2}.
Hamiltonian systems cannot account for dissipative effects,
i.e. changes that are irreversible from a thermodynamic standpoint.
In a specular way, dissipative dynamics usually 
fails to preserve energy. 
Representative examples are the Navier-Stokes equations in fluid dynamics
and the resistive magnetohydrodynamics equations in plasma physics. 
Furthermore, no shared algebraic structure is known, analogous to the
Poisson algebra of Hamiltonian systems, that
characterizes dissipative dynamics. 
The metriplectic bracket formalism introduced in \cite{Kaufman, Morrison84} aims at reconciling the Hamiltonian formulation of dynamics 
with thermodynamics by introducing a bracket formalism that ensures 
both conservation of energy and non-decrease of entropy, and that reduces to the standard Poisson bracket formalism in the limit of no dissipation.  

In the metriplectic formalism, the time evolution of a variable $F$ is given in the form
\begin{equation}
\frac{d F}{d t}=\lr{F,E,S}=\pb{F}{E}+\db{F}{S}. 
\end{equation}
Here, the metriplectic bracket $\lr{,,}$ is the combination of a Poisson bracket $\pb{}{}$ and a dissipative bracket $\db{}{}$ that have an energy $E$ and an entropy $S$ as generating functions respectively. The dissipative bracket is assumed to be bilinear, non-negative, symmetric, and
to satisfy the Leibniz rule. 
The consistency with the first and second laws of thermodynamics requires that 
(notice that $\pb{E}{E}=0$ due to the alternativity of the Poisson bracket)
\begin{subequations}
\begin{align}
&\frac{dE}{dt}=\db{E}{S}=0,\\
&\frac{dS}{dt}=\pb{S}{E}+\db{S}{S} \geq 0.
\end{align}
\end{subequations}
These conditions are satisfied whenever $\db{E}{S}=\pb{S}{E}=0$, because the dissipative bracket is non-negative. 
One may postulate the stronger requirement that the energy $E$ is a Casimir invariant of the dissipative bracket, and that
the entropy $S$ is a Casimir invariant of the Poisson bracket \cite{Morrison86,Morrison09}: 
\begin{subequations}
\begin{align}
&\db{E}{S}=0~~~~\forall S,\\
&\pb{S}{E}=0~~~~\forall E.
\end{align}
\end{subequations}
If these conditions are satisfied, a single
generating function $\Sigma=S-\beta E$ is sufficient 
to generate the dynamics provided that the 
action of the metriplectic bracket is redefined as
\begin{equation}
\frac{dF}{dt}=\lr{F,\Sigma}=-\beta^{-1}\pb{F}{\Sigma}+\db{F}{\Sigma}.
\end{equation}
Here, the parameter $\beta$, which is needed from dimensional arguments, 
can be related to the inverse temperature of the system. 
The metriplectic formalism has been applied  
in the description of different physical systems with dissipation, 
such as convection-diffusion equations 
and magnetohydrodynamics equations with viscosity and resistivity (see for example \cite{Grmela,Materassi12,Coquinot}). 

While the metriplecitc structure described above is fully compatible with thermodynamics,
it still lacks information on the nature of the dissipative part of the bracket.  
This fact can be seen explicitly if one considers the expression of the metriplectic bracket in finite dimensions:
\begin{equation}
\dot{F}=F_i\mc{J}^{ij}E_j+F_i g^{ij}S_j.
\end{equation}
Here, the upper dot stands for time derivative, lower indexes denote partial derivatives, e.g. $E_j=\p E/\p x^j$, where $x^j$
is the $j$th coordinate of an $n$ dimensional coordinate system $\bol{x}=\lr{x^1,...,x^n}$, and 
the standard convention of summation over repeated indexes is used. 
The antisymmetric contravariant tensor $\mc{J}^{ij}=-\mc{J}^{ji}$, $i,j=1,...,n$, 
is a Poisson operator, which is mathematically characterized 
by the axioms that define a Poisson algebra.   
However, the geometrical nature and physical origin of the symmetric non-negative contravariant tensor $g^{ij}$
associated with dissipation is unclear. 
In other words, it is desirable to determine whether the tensor $g^{ij}$ can be derived from first principles,
and if any relationship exists between $\mc{J}^{ij}$ and $g^{ij}$.
We will see that this is indeed the case for the Fokker-Planck construction discussed in the present study.

Several authors have proposed the following form for the tensor $g^{ij}$:
\begin{equation}
g^{ij}=\mc{J}^{ik}\mc{J}^{jk}.\label{gijd}
\end{equation}
Here, it is tacitly assumed that the right-hand side contains a summation over the index $k$,  
while the lowering of the index due to the contraction with an Euclidean metric tensor is omitted.
This same convention will be used throughout the paper.
The dissipative bracket associated with the tensor \eqref{gijd} is usually referred to as a `double bracket', 
because it corresponds to the bracket generated by minus the square of the Poisson operator \cite{Morrison09}.  
The advantage of the form \eqref{gijd} is that the dissipative bracket retains the Casimir invariants
of the Poisson operator, while the rate of change in the energy has a semi-definite sign, 
$\dot{E}=-\beta\db{E}{E}\leq 0$, provided that the generating function is $\Sigma=S-\beta E$ with $S$ a Casimir invariant of the Poisson bracket. This property can be used to determine the equilibria of a Hamiltonian system
with given values of the Casimir invariants. 
The procedure is discussed for the ideal Euler equations in \cite{Vallis}, 
and it has been applied to calculate ideal magnetohydrodynamics equilibria in \cite{Furukawa}.  
Examples of repeated application of two Poisson brackets to describe dissipative behavior 
in the context of kinetic theory and gradient flow can be found in \cite{Holm} and \cite{Bloch}.

In this study, we start with a general Hamiltonian system,
and consider many identical copies of such system to define a statistical ensemble.
The ensemble may consist of identical particles, as in an ideal gas,
or a set of identical equations (infinite dimensional
Hamiltonian systems), e.g. an ensemble of magnetohydrodynamics systems. 
Then, the interaction among the elements of the ensemble is modeled 
in terms of random fluctuations in the energy and a 
dissipative force that sets in due to the constraint imposed by the
conservation of total energy. This procedure is the Hamiltonian counterpart  
of the construction of the Langevin equation.  
The result is a set of stochastic equations of motion 
that inherit the geometrical structure of the original Poisson operator.
These equations are then translated into the corresponding Fokker-Planck equation according to the standard procedure \cite{Gardiner,Risken}. 
The expression of the Fokker-Planck equation was derived in \cite{Sato18,Sato19}
and the analysis of the corresponding diffusion operator is discussed in \cite{Sato19_2}. Our task is to show that this equation can be 
written in terms of a metriplectic bracket. 

There are two aspects of this construction that need to be stressed. 
First, the statistical behavior of the ensemble is modeled around the
expectation that thermodynamic equilibrium should be in the form of
a generalized Boltzmann state. In other words, if $f$ denotes the distribution function, the solution of the Fokker-Planck equation
in the limit $t\rightarrow\infty$ should satisfy
\begin{equation}
f\propto J\exp\left\{-\beta H-\mu_k C^k\right\},\label{feq0}
\end{equation}
where $H$ denotes the Hamiltonian function (the energy of a single constituent of the ensemble), $C^k$ the $k$th Casimir invariant of the original Poisson operator, $\mu_k$ the corresponding Lagrange multiplier (chemical potential),
and $J$ the Jacobian determinant of the transformation of variables linking the
coordinate system used to define the distribution function to
the invariant measure assigned by Liouville's theorem. 
Such preserved volume element is always available (at least locally)
in finite dimensional Hamiltonian systems due to the Lie-Darboux theorem,
which assigns a local phase space measure consisting of canonically conjugated variables plus a given number of Casimir invariants (see \cite{Arnold,deLeon,Littlejohn82}).  
Distributions in the form \eqref{feq0} have been proposed in \cite{Yoshida14}
and reflect the fact that, since for a noncanonical Hamiltonian system $J\neq 1$ in general, equiprobability of microstates cannot be enforced in arbitrary coordinate systems. In particular, the knowledge of $J$ is necessary to
properly identify the entropy of the system, which is not covariant \cite{Jaynes,Sato16}. We will provide an example on how to calculate $J$
when the base Hamiltonian system is infinite dimensional and the notion
of invariant measure is non-trivial. 
This is achieved by expanding the solution in a discrete
basis of the relevant function space, and by using the coefficients
of the expansion as dynamical variables.
Examples of this approach in the context of magnetohydrodynamics
can be found in \cite{Ito,Kraichnan}.
In our example, we will consider the Charney-Hasegawa-Mima equation
\cite{Hasegawa78,Weinstein,Tassi,Swaters}, derive the
preserved phase space volume in Fourier space, and then
construct the Fokker-Planck equation in metriplectic form. 

The second remark concerns the form of the tensor $g^{ij}$ 
appearing in the Fokker-Planck equation. 
We find that $g^{ij}$ appears in the form \eqref{gijd}
inside the dissipative bracket generating the diffusion operator
of the Fokker-Planck equation provided that the 
the coordinate system is chosen to be such that $J=1$, i.e.
\begin{equation}
\db{F}{G}=\frac{D}{2}\int_{\Omega}f\frac{\p}{\p x^i}\lr{\frac{\delta F}{\delta f}}_{\beta}\mc{J}^{ik}\mc{J}^{jk}\frac{\p}{\p x^j}\lr{\frac{\delta G}{\delta f}}_{\beta}dV.
\end{equation}
The details on the notation used here will be given in the relevant sections.
This result shows that the dissipative part of the metriplectic bracket
is related to the Poisson operator.
Furthermore, if we consider the simplest 2 dimensional setting, 
it follows that the canonical form for the covariant version $\mf{g}$ of the tensor $g$ is given by an Euclidean metric tensor
in the phase space coordinates $\lr{p,q}$, i.e. 
\begin{equation}
\mf{g}= dp\otimes dp+dq\otimes dq.
\end{equation}
This result should be compared with the canonical form of the symplectic 2 form
associated with Hamiltonian mechanics, $\omega=dp\w dq=dp\otimes dq-dq\otimes dp$. 

The present paper is organized as follows. 
In section 2 we derive the `microscopic' metriplectic bracket
that generates the stochastic dynamics of each element of the statistical ensemble.	
In section 3 we induce the corresponding `macroscopic' metriplectic bracket
that generates the Fokker-Planck equation for the time evolution of the
distribution function.  
In section 4 we construct the metriplectic bracket
for the Fokker-Planck equation of an ensemble of periodic Charney-Hasegawa-Mima
equations.
In section 5 we discuss certain geometric aspects that characterize
the metriplectic bracket obtained in the present study.
Concluding remarks are given in section 6. 
	


\section{Dissipative Brackets for Stochastic Dynamics}
Aim of the present section is to obtain the dissipative
brackets associated with the stochastic dynamics (random walk) 
of diffusion processes in Hamiltonian systems.
It will be shown that these brackets exhibit a metriplectic structure. 

\subsection{Poisson brackets for microscopic dynamics}

Let $\Omega\subset\mathbb{R}^n$ denote a smoothly bounded domain with boundary $\p\Omega$ 
and $\bol{x}=\lr{x^1,...,x^n}$ a coordinate system in $\Omega$ with tangent basis $\bol{\p}=\lr{\p_1,...,\p_n}$.  
We consider the motion of an ensemble of $N$ particles in $\Omega$. 
When isolated from the others, the trajectory of each particle 
evolves according to the noncanonical Hamiltonian system
\begin{equation}
\dot{x}^i=\mc{J}^{ij}H_j,~~~~i=1,...,n.\label{xdot}
\end{equation} 
Here, lower indexes are used for partial derivatives, e.g. $H_j=\p H/\p x^j$, 
the energy $H\in C^{\infty}\lr{\Omega}$ is the Hamiltonian function, 
and $\mc{J}\in\bigwedge^2 T\Omega$ a Poisson operator, i.e. a 
bivector field (antisymmetric matrix) satisfying the Jacobi identity
\begin{equation}
\mc{J}^{im}\mc{J}_m^{jk}+\mc{J}^{jm}\mc{J}_m^{ki}+\mc{J}^{km}\mc{J}_m^{ij}=0,~~~~i,j,k=1,...,n.\label{JI_m}
\end{equation}  
In the following, we assume that $\mc{J}^{ij}\in C^{\infty}\lr{\Omega}$, $i,j=1,...,n$. 
Both \eqref{xdot} and \eqref{JI_m} can be expressed in terms of the Poisson bracket
\begin{equation}
\left\{\alpha,\beta\right\}_{\rm m}=\alpha_i\mc{J}^{ij}\beta_j,~~~~\alpha,\beta\in C^{\infty}\lr{\Omega}.
\end{equation}
The lower index ${\rm m}$, which stands for `microscopic', 
is used to distinguish this Poisson bracket from the `macroscopic'
Poisson bracket associated with the dynamics of the entire ensemble.  
This second bracket will be derived in the next section.
Then, equations \eqref{xdot} and \eqref{JI_m} read as
\begin{equation}
\dot{x}^i=\left\{x^i,H\right\}_{\rm m},~~~~i=1,...,n,
\end{equation}
and
\begin{equation}
\pb{\alpha}{\pb{\beta}{\gamma}_{\rm m}}_{\rm m}+\car=
\pb{\alpha}{\pb{\beta}{\gamma}_{\rm m}}_{\rm m}+
\pb{\beta}{\pb{\gamma}{\alpha}_{\rm m}}_{\rm m}+
\pb{\gamma}{\pb{\alpha}{\beta}_{\rm m}}_{\rm m}
=0.
\end{equation}
In this notation, the symbol $\car$ indicates summation of even permutations.

It is useful to spend some words on the conservation laws, relevant
for the present study, that
arise from the properties of the Poisson operator $\mc{J}$.
First, observe that antisymmetry $\mc{J}^{ij}=-\mc{J}^{ji}$ 
determines conservation of energy:
\begin{equation}
\dot{H}=H_i\mc{J}^{ij}H_{j}=\frac{1}{2}\lr{H_i\mc{J}^{ij}H_j+H_j\mc{J}^{ji}H_i}=0.
\end{equation}
Similarly, the Jacobi identity \eqref{JI_m} is associated with a conservation law. This fact can be seen explicitly when $\mc{J}$ is invertible. The inverse $\omega^{ij}=-\omega^{ji}$ is given by the components of a 2 form $\omega\in\bigwedge^2 T^\ast\Omega$, called the symplectic 2 form. We have
 \begin{equation}
d\omega
=\sum_{i<j<k}\left(\frac{\partial\omega_{ij}}{\partial x^{k}}+\frac{\partial\omega_{jk}}{\partial x^{i}}+\frac{\partial\omega_{ki}}{\partial x^{j}}\right) dx^{i}\wedge dx^{j}\wedge dx^{k}.
\end{equation}
Multiplying each component of this $3$ form by $\mathcal{J}^{li}\mathcal{J}^{mj}\mathcal{J}^{nk}$ and summing over $i,j,k$, we obtain:
\begin{equation}
\begin{split}
\mathcal{J}^{li}\mathcal{J}^{mj}\mathcal{J}^{nk}\left(\frac{\partial\omega_{ij}}{\partial x^{k}}+\frac{\partial\omega_{jk}}{\partial x^{i}}+\frac{\partial\omega_{ki}}{\partial x^{j}}\right)
=&-\mc{J}^{mj}\mc{J}^{nk}\omega_{ij}\mc{J}^{li}_k
-\mc{J}^{li}\mc{J}^{nk}\omega_{jk}\mc{J}^{mj}_i
-\mc{J}^{li}\mc{J}^{mj}\omega_{ki}\mc{J}^{nk}_j\\
=&\delta_{i}^{m}\mc{J}^{nk}\mc{J}^{li}_k+
\delta^{n}_{j}\mc{J}^{li}\mc{J}^{mj}_i+
\delta^{l}_{k}\mc{J}^{mj}\mc{J}^{nk}_j\\
=&\mc{J}^{nk}\mc{J}^{lm}_k+
\mc{J}^{li}\mc{J}^{mn}_i+
\mc{J}^{mj}\mc{J}^{nl}_j.
\end{split}
\end{equation}
Thus, the closure $d\omega=0$ of the symplectic 2 form $\omega$ is equivalent to the Jacobi identity. 
On the other hand, the equations of motion \eqref{xdot} take the form
\begin{equation}
i_{\dot{\bol{x}}}\omega=-dH,
\end{equation}
which, together with $d\omega=0$, imply the conservation of $\omega$ along the flow generated by $\dot{\bol{x}}$:
\begin{equation}
\mf{L}_{\dot{\bol{x}}}\omega=\lr{di_{\dot{\bol{x}}}+i_{\dot{\bol{x}}}d}\omega=-ddH=0.
\end{equation}
In this notation, $i$ is the contraction operator and $\mf{L}$
the Lie derivative. 

When $\mc{J}$ is not invertible, a similar result applies. 
More precisely, the Lie-Darboux theorem \cite{Arnold,deLeon,Littlejohn82} ensures that in every sufficiently small 
neighborhood $U\subset\Omega$ where the rank $2r=n-m$ of $\mc{J}$ is constant 
there exist $2r$ local coordinates $\bol{z}=\lr{p^1,...,p^r,q^1,...,q^r}$ and $m=n-2r$ Casimir invariants $\lr{C^1,...,C^m}$, with the property that
\begin{equation}
\mc{J}=\sum_{i=1}^r \p_{q^i}\w \p_{p^i},~~~~\omega=\sum_{i=1}^{r}dp^i\w dq^i,~~~~i_{\dot{\bol{z}}}\omega=-dH,\label{eq20}
\end{equation}
and
\begin{equation}
\mc{J}^{ij}C^{k}_{j}=0,~~~~i=1,...,n,~~k=1,...,m.\label{eq21}
\end{equation}
For completeness, a proof of the Lie-Darboux theorem in the case of a degenerate 2 form $\omega$ is given in appendix A. 
Notice that the equation \eqref{eq20} ensures the conservation of the 2 form $\omega$  with respect to $\dot{\bol{z}}$, while equation \eqref{eq21} implies that the Casimir invariants $C^{k}$ are constants of motion for any choice of the Hamiltonian function. 
The conservation of the Casimir invariants is expressed through the Poisson bracket as
\begin{equation}
\dot{C}^k=\pb{C^k}{H}_{\rm m}=0~~~~\forall H,~~k=1,...,m.
\end{equation}

Finally, there is a third conservation law, Liouville's theorem. 
This theorem states that the local phase space measure
provided by the Lie-Darboux theorem 
\begin{equation}
d\Pi=dp^1\w ...\w dp^r \w dq^1\w ... \w dq^r \w dC^1\w ... \w dC^m,\label{dPi} 
\end{equation}
is conserved by the Hamiltonian flow, i.e. 
\begin{equation}
\mf{L}_{\dot{\bol{x}}}d\Pi=\lr{\p_{p^i}\dot{p}^i+\p_{q^i}\dot{q}^i}d\Pi=0~~~~\forall H.\label{Liouville}
\end{equation}
Liouville's theorem can be easily verified by
recalling that $\dot{p}^i=-H_{q^i}$, $\dot{q}^i=H_{p^i}$, and $\dot{C}^k=0$, $i=1,...,r$, $k=1,...,m$.
The invariant measure $d\Pi$, which is independent of the choice of $H$,
is at the core of the classical formulation of statistical mechanics.

\subsection{Dissipative brackets for random walks in noncanonical Hamiltonian systems}

Dissipation occurs when particles interact with each other (in the following, the word particle is used to specify an element of the ensemble).  
The interaction causes the energy $H$ of each particle to fluctuate, 
resulting in a random walk (diffusion process) that drive the ensemble  
toward thermodynamic equilibrium. 
When a scattered particle gains energy, 
an effective friction (dissipative) force sets in. 
These competitive processes are bound together 
by the constraint imposed by the conservation of the total energy of the system.  
Denoting with $\delta H$ the energy fluctuation and with $\mc{F}_j$ the $j$th component of the friction force, the equations of motion take the form
\begin{equation}
\dot{X}^i=\mc{J}^{ij}\lr{H_j+\delta H_j-\mc{F}_j},~~~~i=1,...,n.\label{Xi}
\end{equation}
Here, the uppercase letter $X^i$ is used to distinguish  
the trajectory $X^i\lr{t}$ from the unperturbed trajectory $x^i\lr{t}$
resulting from integration of \eqref{xdot}. 
In order to characterize $\delta H$ and $\mc{F}_j$ some physical assumptions are needed on the relaxation process under consideration.
Below, we follow the construction 
of \cite{Sato18} and \cite{Sato19} to obtain $\delta H$ and $\mc{F}_j$,
and then derive the dissipative bracket associated with \eqref{Xi}.

Let $f\lr{\bol{x},t}$ be the particle distribution function 
(probability density function) in the coordinate system $\lr{x^1,...,x^n}$.
If the system is thermodynamically isolated, the equilibrium 
distribution function should be determined by two factors:
the constraints (conserved quantities) 
and the intrinsic geometry of the forces associated with the relaxation process. Hence, we expect that
\begin{equation}
\lim_{t\rightarrow\infty}f=\frac{1}{Z}J\exp\left\{-\beta H-\mu_k C^k\right\}.\label{feq}
\end{equation} 
In this expression, $Z\in\mathbb{R}$ is a normalization constant  
associated with conservation of total particle number, 
i.e. the constraint 
\begin{equation}
N=\int_{\Omega}f\,dV=1. 
\end{equation}
Here, $dV=dx^1\w ... \w dx^n$ is the volume element. We have
\begin{equation}
Z=\int_{\Omega}J\exp\left\{-\beta H-\mu_kC^k\right\}dV.
\end{equation}
Similarly, the quantities $\beta,\mu_1,...,\mu_k\in\mathbb{R}$ are
physical parameters that can be interpreted as the Lagrange multipliers
of a variational principle where the entropy measure  
\begin{equation}
S=-\int_{\Omega}f\log\lr{\frac{f}{J}}dV,\label{S}
\end{equation}
is extremized under the constraints
\begin{equation}
E=\int_{\Omega}fH\,dV,~~~~\mf{C}^k=\int_{\Omega}fC^k\,dV,~~~~k=1,...,m,
\end{equation}
representing conservation of total energy and total Casimir invariants
during the relaxation process. 
We will refer to $\beta$ as the inverse temperature. 
The equilibrium
distribution function \eqref{feq} is thus obtained from the variational principle
\begin{equation}
\delta\lr{S-\alpha N-\beta E-\mu_k\mf{C}^k}=0
\end{equation}
with $\alpha=\log{Z}-1$ and where the variation is carried out with respect to $f$. The breaking of a constraint will then be represented by the vanishing of the
corresponding Lagrange multiplier. 

The remaining nontrivial quantity is the function $J$ appearing
on the right-hand side of \eqref{feq}, which is related
to the definition of the entropy measure $S$. 
The geometric factor $J$ is a manifestation of the fact that
the `homogenization' of a density, 
such as the broadening of the particle distribution function
accompanied by entropy growth, is coordinate dependent:  
a flat density profile $f_y$ in a given reference frame $\lr{y^1,...,y^n}$
may correspond to an heterogeneous distribution $f=f_yJ$ in a different
coordinate system $\lr{x^1,...,x^n}$ due to the inhomogeneous Jacobian weight $J$ of the transformation. Indeed, 
the distributions $f_y$ and $f$ are related by  
\begin{equation}
f_ydV_y=f_yJdV=fdV,
\end{equation}
where $dV_y=dy^1\w ... \w dy^n=Jdx^1\w ...\w dx^n=JdV$ is the volume element.  
These considerations can be summarized by noting that 
Shannon's information entropy measure $S\left[f\right]=-\int_{\Omega}f\log{f}dV$ for a distribution function $f$ is non-covariant \cite{Sato16}, i.e.  
\begin{equation}
S\left[f_y\right]=-\int_{\Omega}f_y\log{f_y}\,dV_y=S\left[f\right]+\int_{\Omega}f\log{J}\,dV.
\end{equation}
It is now clear that, in order to fully characterize the
equilibrium state \eqref{feq}, one needs to determine $J$,
which encapsulates the intrinsic geometric properties of the physical forces. 
This amounts at finding a coordinate system 
(provided that such a coordinate system exists)
where the relaxation process
is `homogeneous', in the sense that the equilibrium state 
only depends on the constraints affecting the system.
This could accomplished by establishing the conditions 
under which the interaction force $\delta H_j$ is suitably represented
by a spatially homogeneous random process (typically, a Brownian motion) 
that enables the derivation of the effective collision operator of the system.

Unfortunately, the cost of this task usually overcomes the benefit of the theory. 
Therefore, one is led to postulate an ergodic ansatz \cite{Moore} for
the perturbed dynamics,
namely that the transformation $T:\Omega\rightarrow\Omega$ generated 
by the flow $\dot{X}^i$ is ergodic in a submanifold $\Omega_{C}=\left\{\bol{x}\in\mathbb{R}^n~\rvert~C^1=c^1,...,C^{m}=c^m\right\}$ with $c^i\in\mathbb{R}$, $i=0,...,m$. 
This means that the particle eventually explores the whole 
reduced phase space $\Omega_{C}$, and, in the limit $t\rightarrow\infty$ and for a given value of the energy,  
the time spent by a particle 
in a certain region of $\Omega_{C}$ 
is proportional to the volume of that region. 
The reason why the constraints $C^k$, $k=1,...,m$, are not broken by the relaxation process 
is that energy fluctuations always result in
scatterings that are tangential to the Casimir isosurfaces, $\dot{C}^k=C^k_i\mc{J}^{ij}\lr{H_j+\delta H_j-\mc{F}_j}=0$. 
Violation of the Casimir invariants occurs when the structure of the phase space itself is subject to fluctuations,
implying that the effective Poisson operator $\mc{J}'$ contains a deviation term, $\mc{J}'=\mc{J}+\delta\mc{J}$.
However, this second scenario is not examined in the present study. 

The essential prerequisite for the ergodic hypothesis to hold is that
the transformation $T$ is measure preserving, i.e. that there exists
and invariant measure $JdV$ such that
\begin{equation}
\mf{L}_{\dot{\bol{X}}}JdV=\p_i\lr{J\dot{X}^i}\,dV=0.
\end{equation}
Since the form of $\dot{X}^i$ is not known a priori, the condition above is replaced by the following
requirement on the unpertubed single particle motion:
\begin{equation}
\mf{L}_{\dot{\bol{x}}}JdV=\p_i\lr{J\dot{x}^i}\,dV=\p_i\lr{J\mc{J}^{ij}}H_{j}\,dV=0~~~~\forall H,
\end{equation}
which implies
\begin{equation}
\p_i\lr{J\mc{J}^{ij}}=0,~~~~j=1,...,n.\label{IM}
\end{equation}
When a nontrivial solution exists, this condition assigns an invariant measure $JdV$ for any choice of the Hamiltonian $H$, 
so that fluctuations in $H$ do not affect the Jacobian weight $J$.
As a consequence of the Lie-Darboux and Liouville's theorems, equation \eqref{IM} 
always has a local solution in Hamiltonian systems. 
Then, $JdV=d\Pi$, with $d\Pi$ the phase space measure \eqref{dPi}. 
It follows that the thermodynamically consistent entropy measure is given by 
Shannon's information entropy measure for the distribution function $f_{y}=f/J$ on the invariant measure $dV_y=JdV$:
\begin{equation}
S=S\left[f_y\right]=-\int_{\Omega}f_y\log{f_y}\,dV_y=-\int_{\Omega}f\log\lr{\frac{f}{J}}\,dV,
\end{equation}
which is the expression postulated in \eqref{S} but with the difference that
now $J$ is known.

In the following, we simplify the notation by assuming that the coordinate system $\lr{x^1,...,x^n}$
already spans the invariant measure, implying that $J=1$ and $f_y=f$. 
The ergodic hypothesis then enables the interchange of
ensemble averages with time averages. 
Here, we assume that the fluctuating force $\delta H_j$ has vanishing ensemble average 
(any term with non-vanishing ensemble average should be reabsorbed in $H_j$)
and replace it with a spatially homogeneous random
process with zero time average,
\begin{equation}
\delta H_j=D^{1/2}\Gamma_j,~~~~j=1,...,n.\label{dHj}
\end{equation}
In the equation above, $D$ is a non-negative real constant (diffusion parameter) representing the amplitude of the 
fluctuations and $\Gamma_j$ the $j$th component of an $n$ dimensional Gaussian white noise process. 
Integrating equation \eqref{dHj} leads to the following expression for the energy fluctuation
\begin{equation}
\delta H=D^{1/2}x^j\Gamma_j.
\end{equation} 
Finally, we assume that the friction force acts against the unperturbed particle velocity
through a proportionality coefficient (friction coefficient) $\gamma$,
\begin{equation}
\mc{F}_j=-\gamma\mc{J}^{jk}H_k.\label{Fj}
\end{equation}
That this is the correct expression for $\mc{F}_j$ can be verified
by showing that the resulting stochastic dynamics produces the desired equilibrium state, equation \eqref{feq}.
This is proved in \cite{Sato18} and \cite{Sato19}. We will review this fact in the next section when discussing
the bracket formalism for the associated Fokker-Planck equation.
We will also see that the value of the spatial constant $\gamma$ 
is related to the diffusion parameter $D$ 
by the constraint imposed by conservation of energy
through the inverse temperature $\beta$.
This fluctuation-dissipation relation effectively determines
the temperature of the system at each time $t$.

The equation of motion \eqref{Xi} expressed in the coordinates spanning the invariant measure now takes the form
\begin{equation}
\dot{X}^i=\mc{J}^{ij}\lr{H_j+D^{1/2}\Gamma_j+\gamma\mc{J}^{jk}H_k}.\label{Xi2}
\end{equation} 
In this expression there is a violation of the summation convention since
the $j$ index appears always as an upper index in the third term on the right-hand side. 
This is because, in a general setting, $\gamma$ should not be interpredted simply as a spatial constant,
but rather as a covariant tensor $\gamma_{jk}$. 
Then, the friction force reads as
\begin{equation}
\mc{F}_j=-\gamma_{jk}\mc{J}^{kl}H_l.
\end{equation} 
We postulate that $\gamma_{jk}=\gamma\delta_{jk}$
in the coordinates spanning the invariant measure, which
gives \eqref{Fj}. As already mentioned, the consistency of such hypothesis
with energy conservation and entropy law 
will be verified a posteriori. 

It is useful to write equation \eqref{Xi2} when $\mc{J}$
is the 2 dimensional symplectic matrix
\begin{equation}
\mc{J}=\begin{bmatrix}0&-1\\1&0\end{bmatrix},
\end{equation}
and $\lr{x^1,x^2}=\lr{p,q}$ are canonical coordinates. 
Denoting stochastic variables with uppercase letters, we have
\begin{subequations}
\begin{align}
\dot{P}&=-H_q-D^{1/2}\Gamma_q-\gamma H_p,\\
\dot{Q}&=H_p+D^{1/2}\Gamma_p-\gamma H_q,
\end{align}
\end{subequations}
which is the phase space version of the classical Langevin equation
\begin{equation}
m\ddot{\bol{X}}=\bol{F}+D^{1/2}\bol{\Gamma}-\gamma\dot{\bol{X}}.
\end{equation}
Here, $m$ is the particle mass and $\bol{F}$ represents force. 

Given $\alpha,\beta\in C^{\infty}\lr{\Omega}$, the metriplectic bracket 
associated with equation \eqref{Xi2} can be identified to be
the combination of the following brackets:
\begin{subequations}\label{mmb}
\begin{align}
\pb{\alpha}{\beta}_{\rm m}&=\alpha_i\mc{J}^{ij}\beta_j,\label{pbm}\\
\left[\alpha,\beta\right]_{\rm m}&=\alpha_i\mc{J}^{ik}\mc{J}^{jk}\beta_j.\label{dbm}
\end{align}
\end{subequations}
Here, $\pb{}{}_{\rm m}$ is the microscopic Poisson bracket encountered in the previous section,
while $\left[,\right]_{\rm m}$ is the microscopic dissipative bracket.
Notice that the tensor $\mc{J}^{ik}\mc{J}^{jk}$ is symmetric and non-negative
as required by the definition. 
Hence, the stochastic equation \eqref{Xi2} can be cast in bracket notation as below:
\begin{equation}
\dot{X}^i=\pb{x^i}{H+\delta H}_{\rm m}-\left[x^i,\gamma H\right]_{\rm m}.\label{Xdotbm}
\end{equation}
In the next section, it will be shown that the microscopic metriplectic bracket \eqref{mmb}
induce a macroscopic metriplectic bracket for the time evolution of the distribution function $f$
that is consistent with conservation of energy and the second law of thermodynamics. 

\section{Dissipative Brackets for the Fokker-Planck Equation}

The stochastic equation of motion \eqref{Xi2} 
can be translated into a Fokker-Planck equation for
the distribution function $f$.
Details can be found in \cite{Sato18,Sato19}. 
The result is
\begin{equation}
\frac{\p f}{\p t}=\frac{\p}{\p x^i}\left[-\mc{J}^{ij}H_jf+\gamma\mc{J}^{ik}\mc{J}^{jk}H_jf+\frac{1}{2}D\mc{J}^{ik}\frac{\p}{\p x^j}\lr{\mc{J}^{jk}f}\right].\label{FPE1}
\end{equation}
Here, the Stratonovich convention for stochastic integration was adopted. 
Since, by construction, $f$ is the distribution function on the invariant measure, i.e. $J=1$, from equation \eqref{IM} we have 
\begin{equation}
\mc{J}^{ij}_i=0,~~~~j=1,...,n.\label{IM2}
\end{equation} 
Hence, equation \eqref{FPE1} can be further simplified to
\begin{equation}
\frac{\p f}{\p t}=-\mc{J}^{ij}H_jf_i+\frac{1}{2}D\mc{J}^{ik}\frac{\p}{\p x^i}\left[f\mc{J}^{jk}\frac{\p}{\p x^j}\lr{\log f+\beta H}\right],\label{FPE2}
\end{equation}
where we defined the spatial constant (inverse temperature)
\begin{equation}
\beta=\frac{2\gamma}{D}.
\end{equation}
Notice that $\beta=\beta\left[f\right]$ (and therefore $\gamma$) 
is a functional of $f$.  
Thus, when the system is outside of equilibrium, $\beta$ can be a function of time. These aspects will be discussed later in relation to conservation of energy.  
From equation \eqref{FPE2} one recognizes 
a candidate Poisson bracket,
\begin{equation}
\left\{F,G\right\}=\int_{\Omega}f\frac{\p}{\p x^i}\lr{\frac{\delta F}{\delta f}}_{\beta}\mc{J}^{ij}\frac{\p}{\p x^j}\lr{\frac{\delta G}{\delta f}}_{\beta}dV,~~~~F,G\in \mc{X}^\ast.\label{PBM}
\end{equation}
and a candidate dissipative bracket
\begin{equation}
\left[F,G\right]=\frac{D}{2}\int_{\Omega}f\frac{\p}{\p x^i}\lr{\frac{\delta F}{\delta f}}_{\beta}\mc{J}^{ik}\mc{J}^{jk}\frac{\p}{\p x^j}\lr{\frac{\delta G}{\delta f}}_{\beta}dV,~~~~F,G\in \mc{X}^\ast.\label{DBM}
\end{equation}
In this notation, $\mc{X}^\ast$ represents the set of differentiable functionals $F:\mc{X}\rightarrow\mathbb{R}$, 
with $\mc{X}$ a vector space over $\mathbb{R}$, 
while the lower index $\beta$ stands for variations
at fixed temperature. The restriction of variations $\delta f$ 
to those that leave $\beta$ unchanged is needed to enforce conservation 
of energy. In the following, we shall omit the lower index $\beta$
to simplify the notation. 
In order to validate \eqref{PBM} and \eqref{DBM}
we must verify that the defining properties of 
Poisson and dissipative brackets are satisfied, and 
that these brackets generate the Fokker-Planck equation \eqref{FPE2}
by suitable choice of total energy, entropy, and boundary conditions. 

First, consider the candidate bracket \eqref{PBM}.
This bracket defines a Poisson algebra in $\mc{X}^\ast$ over 
the field of real numbers $\mathbb{R}$ 
provided that it satisfies the following axioms:
\begin{subequations}
\begin{align}
&\pb{aF+bG}{H}=a\pb{F}{H}+b\pb{G}{H},~~~~\pb{H}{aF+bG}=a\pb{H}{F}+b\pb{H}{G},\label{P1}\\
&\pb{F}{F}=0,\label{P2}\\
&\pb{F}{G}=-\pb{G}{F},\label{P3}\\
&\pb{FG}{H}=F\pb{G}{H}+\pb{F}{H}G,\label{P4}\\
&\pb{F}{\pb{G}{H}}+\car=0,\label{P5}
\end{align}
\end{subequations}
forall $a,b\in\mathbb{R}$ and $F,G,H\in\mc{X}^\ast$.
These axioms are bilinearity, alternativity, antisymmetry 
(which follows from the first two axioms), 
Leibniz rule, and Jacobi identity respectively. 
\eqref{P1} is trivially satisfied.
Alternativity \eqref{P2} and antisymmetry \eqref{P3} follow
from the antisymmetry of the Poisson tensor $\mc{J}$.
The Leibniz rule \eqref{P4} is also satisfied. Indeed, 
\begin{equation}
\begin{split}
\pb{FG}{H}&=\int_{\Omega}f\frac{\p}{\p x^i}\lr{F\frac{\delta G}{\delta f}+G\frac{\delta F}{\delta f}}\mc{J}^{ij}\frac{\p}{\p x^j}\lr{\frac{\delta H}{\delta f}}dV\\&=
F\int_{\Omega}f\frac{\p}{\p x^i}\lr{\frac{\delta G}{\delta f}}\mc{J}^{ij}\frac{\p}{\p x^j}\lr{\frac{\delta H}{\delta f}}dV+
G\int_{\Omega}f\frac{\p}{\p x^i}\lr{\frac{\delta F}{\delta f}}\mc{J}^{ij}\frac{\p}{\p x^j}\lr{\frac{\delta H}{\delta f}}dV=F\pb{G}{H}+\pb{F}{H}G.
\end{split}
\end{equation}
Next, consider the Jacobi identity \eqref{P5}:
\begin{equation}
\begin{split}
\pb{F}{\pb{G}{H}}+\car&=\int_{\Omega}f\frac{\p}{\p x^i}\lr{\frac{\delta F}{\delta f}}\mc{J}^{ij}\frac{\p}{\p x^j}\lr{\frac{\delta\pb{G}{H}}{\delta f}}dV+\car
\\&=\int_{\Omega}f\frac{\p}{\p x^i}\lr{\frac{\delta F}{\delta f}}\mc{J}^{ij}\frac{\p}{\p x^j}\left[\frac{\delta}{\delta f}\int_{\Omega}f\frac{\p}{\p x^m}\lr{\frac{\delta G}{\delta f}}\mc{J}^{mp}\frac{\p}{\p x^p}\lr{\frac{\delta H}{\delta f}}dV\right]dV+\car.\label{JI1}
\end{split}
\end{equation}
It is convenient to simplify the notation by specifying derivatives with lower indexes, e.g. $F_{if}=\frac{\p}{\p x^i}\lr{\frac{\delta F}{\delta f}}$.
Equation \eqref{JI1} becomes
\begin{equation}
\begin{split}
\pb{F}{\pb{G}{H}}+\car&=
\int_{\Omega}fF_{if}\mc{J}^{ij}\frac{\p}{\p x^j}\lr{\frac{\delta}{\delta f}\int_{\Omega}f\mc{J}^{mp}G_{mf}H_{pf}dV}dV
+\car.\label{JI2}
\end{split}
\end{equation}
Next, observe that
\begin{equation}
\delta\int_{\Omega}f\mc{J}^{mp}G_{mf}H_{pf}dV=\int_{\Omega}{\delta f \mc{J}^{mp}G_{mf}H_{pf}}dV+\int_{\Omega}f\mc{J}^{mp}\delta\lr{G_{mf}H_{pf}}dV.
\end{equation}
The second term on the right-hand side does not contribute
to the Jacobi identity because it contains second order functional derivatives.
The cancellation of terms involving second order derivatives can be
easily verified for the Jacobi identity in finite dimensions.
Equation \eqref{JI2} thus reduces to
\begin{equation}
\pb{F}{\pb{G}{H}}+\car=\int_{\Omega}f\left[\mc{J}^{ij}\mc{J}^{mp}_jF_{if}G_{mf}H_{pf}+\mc{J}^{ij}\mc{J}^{mp}F_{if}\lr{G_{jmf}H_{pf}+G_{mf}H_{jpf}}\right]dV+\car.
\end{equation}
The first term in the integrand vanishes due to the Jacobi identity satisfied 
by the Poisson operator $\mc{J}^{ij}$,
\begin{equation}
\mc{J}^{ij}\mc{J}_j^{mp}+\mc{J}^{mj}\mc{J}_j^{pi}+\mc{J}^{pj}\mc{J}_j^{im}=0~~~~i,m,p=1,...,n.
\end{equation}
The second term can be rearranged as follows:
\begin{equation}
\begin{split}
\mc{J}^{ij}\mc{J}^{mp}F_{if}\lr{G_{jmf}H_{pf}+G_{mf}H_{jpf}}+\car=
&\frac{1}{2}\mc{J}^{ij}\mc{J}^{mp}F_{if}G_{jmf}H_{pf}
+\frac{1}{2}\mc{J}^{im}\mc{J}^{jp}F_{if}G_{jmf}H_{pf}\\
&+\frac{1}{2}\mc{J}^{ij}\mc{J}^{mp}F_{if}G_{mf}H_{jpf}
+\frac{1}{2}\mc{J}^{ip}\mc{J}^{mj}F_{if}G_{mf}H_{jpf}\\
&+\frac{1}{2}\mc{J}^{ij}\mc{J}^{mp}G_{if}H_{jmf}F_{pf}
+\frac{1}{2}\mc{J}^{im}\mc{J}^{jp}G_{if}H_{jmf}F_{pf}\\
&+\frac{1}{2}\mc{J}^{ij}\mc{J}^{mp}G_{if}H_{mf}F_{jpf}
+\frac{1}{2}\mc{J}^{ip}\mc{J}^{mj}G_{if}H_{mf}F_{jpf}\\
&+\frac{1}{2}\mc{J}^{ij}\mc{J}^{mp}H_{if}F_{jmf}G_{pf}
+\frac{1}{2}\mc{J}^{im}\mc{J}^{jp}H_{if}F_{jmf}G_{pf}\\
&+\frac{1}{2}\mc{J}^{ij}\mc{J}^{mp}H_{if}F_{mf}G_{jpf}
+\frac{1}{2}\mc{J}^{ip}\mc{J}^{mj}H_{if}F_{mf}G_{jpf}.
\end{split}
\end{equation}
The terms on the right-hand side cancel in pairs.
For example, 
\begin{equation}
\begin{split}
&\frac{1}{2}\mc{J}^{ij}\mc{J}^{mp}F_{if}G_{jmf}H_{pf}+
\frac{1}{2}\mc{J}^{ip}\mc{J}^{mj}H_{if}F_{mf}G_{jpf}=\\
&\frac{1}{2}\mc{J}^{ij}\mc{J}^{mp}F_{if}G_{jmf}H_{pf}+
\frac{1}{2}\mc{J}^{im}\mc{J}^{pj}H_{if}F_{pf}G_{jmf}=\\
&\frac{1}{2}\mc{J}^{ij}\mc{J}^{mp}F_{if}G_{jmf}H_{pf}+
\frac{1}{2}\mc{J}^{pm}\mc{J}^{ij}H_{pf}F_{if}G_{jmf}=0.
\end{split}
\end{equation}
Hence, the Jacobi identity is satisfied, and \eqref{PBM} is
a Poisson bracket. 
We refer the reader to \cite{Marsden} for additional considerations  
on the natural Poisson structure on the dual space of a Poisson algebra.  

Similarly, the candidate bracket \eqref{DBM} defines a dissipative algebra
in $\mc{X}^\ast$ over the field of real numbers $\mathbb{R}$ 
provided that it satisfies the following axioms:
\begin{subequations}
\begin{align}
&\db{aF+bG}{H}=a\db{F}{H}+b\db{G}{H},~~~~\db{H}{aF+bG}=a\db{H}{F}+b\db{H}{G},\label{D1}\\
&\db{F}{F}\geq 0,\label{D2}\\
&\db{F}{G}=\db{G}{F},\label{D3}\\
&\db{FG}{H}=F\db{G}{H}+\db{F}{H}G,\label{D4}
\end{align}
\end{subequations}
forall $a,b\in\mathbb{R}$ and $F,G,H\in\mc{X}^\ast$.
These axioms are bilinearity, non-negativity, symmetry, 
and Leibniz rule respectively. 
Bilinearity \eqref{D1} is trivially satisfied.
Non-negativity \eqref{D2} can be verified as below
\begin{equation}
\db{F}{F}=\frac{D}{2}\int_{\Omega}f\frac{\p}{\p x^i}\lr{\frac{\delta F}{\delta f}}\mc{J}^{ik}\mc{J}^{jk}\frac{\p}{\p x^j}\lr{\frac{\delta F}{\delta f}}dV=
\frac{D}{2}\sum_{i=1}^n\int_{\Omega}f\left[\mc{J}^{ij}\frac{\p}{\p x^j}\lr{\frac{\delta F}{\delta f}}\right]^2dV\geq 0.
\end{equation}
Here, we used the fact that $f\geq 0$ since it is a distribution function.
Symmetry \eqref{D3} 
follows from the symmetry of the tensor $\mc{J}^{ik}\mc{J}^{jk}$. 
The Leibniz rule \eqref{D4} can be checked with the same calculation 
used for the Poisson bracket.
An additional axiom may be considered for the dissipative bracket
that replaces the Jacobi identity of the Poisson case.
For example, one may require the dissipative bracket
to originate from a Poisson bracket (as in the case under consideration). 
We will refer to a dissipative bracket induced from a Poisson bracket as a natural
dissipative bracket. 
Some further geometrical aspects pertaining to dissipative
algebras will be discussed in section 5. 

In order to express the Fokker-Planck equation \eqref{FPE2} in terms of the new metriplectic brackets,
we consider the following macroscopic observables:
\begin{subequations}\label{obs}
\begin{align}
&N=\int_{\Omega}f\,dV=1,\\
&E=\int_{\Omega}fH\,dV,\\
&\mf{C}^k=\int_{\Omega}fC^k\,dV,~~~~k=1,...,m,\\
&S=-\int_{\Omega}{f\log{f}\,dV},\\
&\Sigma=S-\alpha N-\beta E-\mu_k\mf{C}^k.
\end{align}
\end{subequations}
As in the previous section, $N$ is the total particle number, $E$ the total energy,
$\mf{C}^k$ the total $k$th Casimir invariant, and $S$ Shannon's information entropy measure
for the distribution function on the invariant measure. We shall refer to the quantity $\Sigma$ 
as the constrained entropy of the system. 
Next, we express the Fokker-Planck equation \eqref{FPE2} in divergence form
\begin{equation}
\frac{\p f}{\p t}=-\frac{\p}{\p x^i}\lr{fZ^i},
\end{equation}
with
\begin{equation}
Z^i=\mc{J}^{ij}H_j-\frac{1}{2}D\mc{J}^{ik}\mc{J}^{jk}\frac{\p}{\p x^j}\lr{\log{f}+\beta H},~~~~i=1,...,n,
\end{equation}
the $i$th component of Fokker-Planck velocity $\bol{Z}=Z^i\p_i$. 
To ensure the thermodynamic closure of the system we impose the
boundary conditions
\begin{equation}
\dot{\bol{x}}\cdot\bol{n}=0,~~~~\bol{Z}\cdot\bol{n}=0~~~~{\rm on}~~\p\Omega,\label{BC}
\end{equation}
with $\bol{n}$ the unit outward normal to $\p\Omega$. 
The first boundary condition applies to the dynamics of the unperturbed
particle $\dot{\bol{x}}=\mc{J}^{ij}H_j\p_i$, and can be satisfied, for example, 
by choosing $\p\Omega$ to be a level set of the Hamiltonian $H$. Then, 
$\bol{n}=\nabla H/\abs{\nabla H}$ and $\dot{\bol{x}}\cdot\bol{n}=\mc{J}^{ij}H_iH_j/\abs{\nabla H}=0$ on $\p\Omega$.  
The second boundary condition applies to the dynamics of the ensemble. 
When $\bol{n}=\nabla H/\abs{\nabla H}$, we have
\begin{equation}
\bol{Z}\cdot\bol{n}=-\frac{D}{2\abs{\nabla H}}\mc{J}^{ik}\mc{J}^{jk}H_i\frac{\p}{\p x^j}\lr{\log f+\beta H}=0~~~~{\rm on}~~\p\Omega.
\end{equation}
This equation gives the boundary condition for the distribution function $f$. 
Under the hypothesis above and appropriate choice of the spatial constant $\beta=2\gamma/D$, we claim that
\begin{subequations}
\begin{align}
&\frac{\p f}{\p t}=\lr{f,E,\Sigma},\label{fmb}\\
&\frac{dE}{dt}=\lr{E,E,\Sigma}=0,\label{Emb}\\
&\frac{d\Sigma}{dt}=\lr{\Sigma,E,\Sigma}\geq 0,\label{Smb}
\end{align}
\end{subequations}
where $\lr{\circ,E,\Sigma}=\mb{\circ}{E}{\Sigma}$ denotes the metriplectic bracket.
To see this, first we need to explain how boundary conditions \eqref{BC} are applied  because the
coordinate system $\bol{x}=\lr{x^1,...,x^n}$ is curvilinear and the application of the divergence theorem is not straightforward.
Let $\bol{x}_c=\lr{x_c^1,...,x_c^n}$ denote a Cartesian coordinate system in $\Omega$ with tangent
basis $\bol{\p}_c=\lr{\p_{c1},...,\p_{cn}}$ and such that $dV=J_c dV_c$,
with $dV_c=dx^1_c\w ... \w dx^n_c$. Consider a vector field $\bol{v}=v^i\p_i=v_c^i\p_{ci}$. We have 
\begin{equation}
\int_{\Omega}\frac{\p v^i}{\p x^i}\,dV=\int_{\Omega}\mf{L}_{\bol{v}}dV=\int_{\Omega}\mf{L}_{\bol{v}}J_cdV_c=\int_{\Omega}\frac{\p}{\p x_c^i}\lr{J_c v_c^i}\,dV_c=\int_{\p\Omega}\bol{v}\cdot\bol{n}\,J_cdS_c=\int_{\p\Omega}\bol{v}\cdot\bol{n}\, dS.
\end{equation}
In the calculation above, we used the divergence theorem in the penultimate passage, 
and introduced the notation $dS=J_cdS_c$, with $dS_c$ 
the surface element on $\p\Omega$. Hence, if $\bol{v}\cdot\bol{n}=0$ on $\p\Omega$, the integral vanishes.   

Now consider conservation of energy \eqref{Emb}. Using \eqref{FPE1} and applying boundary conditions,   
\begin{equation}
\frac{dE}{dt}=\int_{\Omega}f_tHdV=\int_{\Omega}fZ^iH_idV=-\frac{D}{2}\int_{\Omega}fH_i\mc{J}^{ik}\mc{J}^{jk}\frac{\p}{\p x^j}\lr{\log{f}+\beta H}dV.\label{dEdt}
\end{equation}
For the right-hand side to vanish, the following must hold:
\begin{equation}
\beta=-\frac{\int_{\Omega}H_i\mc{J}^{ik}\mc{J}^{jk}f_jdV}{\int_{\Omega}fH_i\mc{J}^{ik}\mc{J}^{jk}H_jdV}=-\frac{\sum_{i=1}^n\langle\dot{x}^i\mc{J}^{ij}\p_j\log{f}\rangle}{\sum_{i=1}^n\langle\lr{\dot{x}^i}^2\rangle}.\label{beta}
\end{equation}
In this notation, the angle bracket denotes ensemble averaging.
This relationship defines the inverse temperature $\beta$ at each time $t$.
If the system is sufficiently close to equilibrium, 
$\beta$ can be replaced by its equilibrium value and treated as a space-time constant. 
However, in general $\beta=\beta\left[f\right]$ is a functional of $f$.
Returning to \eqref{Emb}, observe that $\pb{E}{E}=0$. Hence, 
\begin{equation}
\begin{split}
\lr{E,E,\Sigma}=\db{E}{\Sigma}=&\frac{D}{2}\int_{\Omega}fH_i\mc{J}^{ik}\mc{J}^{jk}\frac{\p}{\p x^j}\lr{-\log{f}-1-\alpha-\beta H-\mu_p C^p}\\
=&-\frac{D}{2}\int_{\Omega}fH_i\mc{J}^{ik}\mc{J}^{jk}\frac{\p}{\p x^j}\lr{\log{f}+\beta H}dV,\label{mdedt}
\end{split}
\end{equation}
which is exactly \eqref{dEdt}. This gives conservation of energy \eqref{Emb} when
the inverse temperature is given by \eqref{beta}.
Notice that, when evaluating the macroscopic brackets, 
variations are carried out at fixed $\beta$. 
Furthermore, in the last passage of \eqref{mdedt}, we used the fact that $N$ is a Casimir invariant of the macroscopic brackets,
and that the macroscopic brackets inherit the Casimir invariants $\mf{C}^k$, $k=1,...,m$, from microscopic dynamics, i.e. 
\begin{equation}
\pb{\mf{C}^k}{E}=\db{\mf{C}^k}{\Sigma}=\lr{\mf{C}^{k},E,\Sigma}=0~~~~\forall E,\Sigma,~~k=1,...,m.
\end{equation}

Next, consider equation \eqref{fmb}. We have
\begin{equation}
\begin{split}
\lr{f,E,\Sigma}=&\int_{\Omega}f\frac{\p}{\p x^i}\left[\delta\lr{\bol{x}-\bol{x}'}\right]\mc{J}^{ij}\frac{\p H}{\p x^j}\,dV\\&+
\frac{D}{2}\int_{\Omega}f\frac{\p}{\p x^i}\left[\delta\lr{\bol{x}-\bol{x}'}\right]\mc{J}^{ik}\mc{J}^{jk}\frac{\p}{\p x^j}\lr{-\log{f}-1-\alpha-\beta H-\mu_p C^p}dV\\
=&\int_{\Omega}f\frac{\p}{\p x^i}\left[\delta\lr{\bol{x}-\bol{x}'}\right]\left[\mc{J}^{ij}H_j-\frac{D}{2}\mc{J}^{ik}\mc{J}^{jk}\frac{\p}{\p x^j}\lr{\log{f}+\beta H}\right]dV.
\end{split}
\end{equation}
Using the boundary condition $\bol{Z}\cdot\bol{n}=0$ on $\p\Omega$ to eliminate surface integrals, 
and recalling that $\mc{J}^{ij}_i=0$, $j=1,...,n$, 
integration by parts gives
\begin{equation}
\lr{f,E,\Sigma}=\frac{\p}{\p x^i}\left[-\mc{J}^{ij}H_jf+\gamma\mc{J}^{ik}\mc{J}^{jk}H_jf+\frac{1}{2}D\mc{J}^{ik}\frac{\p}{\p x^j}\lr{\mc{J}^{jk}f}\right],
\end{equation}
which is the right-hand side of the Fokker-Planck equation \eqref{FPE1} as desired.

Consider now the entropy law \eqref{Smb}. 
From the Fokker-Planck equation \eqref{FPE1}, 
boundary conditions \eqref{BC}, conservation of
total particle number $N$, conservation of total energy $E$, 
and conservation of total Casimir invariants $\mf{C}^k$, $k=1,...,m$,
it follows that
\begin{equation}
\begin{split}
\frac{d\Sigma}{dt}=\frac{dS}{dt}=&
-\int_{\Omega}{f_t\lr{\log{f}+1}}\,dV=
\int_{\Omega}\frac{\p}{\p x^i}\left[fZ^i\lr{\log{f}+1}\right]dV
-\int_{\Omega}Z^if_i\,dV\\
=&\int_{\p\Omega}f\lr{\log{f}+1}\bol{Z}\cdot\bol{n}\,dS
-\int_{\Omega}\left[\mc{J}^{ij}H_jf_i-\frac{1}{2}D\mc{J}^{ik}\mc{J}^{jk}f_i\frac{\p}{\p x^j}\lr{\log{f}+\beta H}\right]dV\\
=&-\int_{\Omega}\frac{\p}{\p x^i}\lr{\mc{J}^{ij}H_jf}dV
+\frac{1}{2}D\int_{\Omega}f\mc{J}^{ik}\mc{J}^{jk}\frac{\p}{\p x^i}\lr{\log{f}+\beta H}\frac{\p}{\p x^j}\lr{\log{f}+\beta H}dV\\&-\frac{1}{2}D\int_{\Omega}f\mc{J}^{ik}\mc{J}^{jk}\beta H_i\frac{\p}{\p x^j}\lr{\log{f}+\beta H}dV\\
=&-\int_{\p\Omega}f\dot{\bol{x}}\cdot\bol{n}\,dS+\frac{1}{2}D\sum_{i=1}^n\int_{\Omega}f\left[\mc{J}^{ij}\frac{\p}{\p x^j}\lr{\log{f}+\beta H}\right]^2dV
-\beta \frac{dE}{dt}\\
=&\frac{1}{2}D\sum_{i=1}^n\int_{\Omega}f\left[\mc{J}^{ij}\frac{\p}{\p x^j}\lr{\log{f}+\beta H}\right]^2dV\geq 0.\label{dSdt}
\end{split}
\end{equation}
In this calculation we used 
the hypothesis that $D\geq 0$ and $f\geq 0$.
Notice that, if $f>0$ at thermodynamic equilibrium, equation \eqref{dSdt}
implies that
\begin{equation}
\lim_{t\rightarrow \infty}\mc{J}^{ij}\frac{\p}{\p x^j}\lr{\log{f}+\beta H}=0.
\end{equation}
This expression gives the expected equilibrium state \eqref{feq}.
In a similar way, one sees that
\begin{equation}
\begin{split}
\lr{\Sigma,E,\Sigma}=&\int_{\Omega}f\frac{\p}{\p x^i}\lr{-\log{f}-\beta H}\mc{J}^{ij}
\frac{\p H}{\p x^j}\,dV+\frac{D}{2}\int_{\Omega}f\frac{\p}{\p x^i}\lr{-\log{f}-\beta H}\mc{J}^{ik}\mc{J}^{jk}\frac{\p}{\p x^j}\lr{-\log{f}-\beta H}dV\\
=&-\int_{\p\Omega}f\dot{\bol{x}}\cdot\bol{n}\,dS
+\frac{1}{2}D\sum_{i=1}^n\int_{\Omega}f\left[\mc{J}^{ij}\frac{\p}{\p x^j}\lr{\log{f}+\beta H}\right]^2dV\\=&
\frac{1}{2}D\sum_{i=1}^n\int_{\Omega}f\left[\mc{J}^{ij}\frac{\p}{\p x^j}\lr{\log{f}+\beta H}\right]^2dV,
\end{split}
\end{equation}
which is exactly the rate of change obtained in \eqref{dSdt}.

Finally, observe that, since $\pb{\Sigma}{E}=0$ forall $E$,
it is possible to use $\Sigma$ as a single generating function
by redefining the Poisson bracket as $\pb{}{}\rightarrow -\beta^{-1}\pb{}{}$. Then,
\begin{equation}
\frac{\p f}{\p t}=\lr{f,\Sigma}.
\end{equation}
However, notice that $E$ is not a Casimir invariant of the dissipative part of the bracket.

\section{Fokker-Planck Equation for Infinite Dimensional Hamiltonian Systems}

Aim of the present section is to 
provide some examples of how the formalism
discussed above can be applied
to construct a Fokker-Planck equation with a metriplectic structure
and a thermodynamic equilibrium of the type \eqref{feq}
for infinite dimensional Hamiltonian systems. 
The main hurdle in generalizing the theory
from finite dimensions to infinite dimensions
is represented by the notion of volume  
in arbitrary function spaces. 
Given a function space $\mc{X}$ 
with elements $\rho:\Omega\rightarrow\mathbb{R}$, 
suppose that there exists an orthonormal expansion
\begin{equation}
\rho=\sum_{i=-\infty}^{\infty}\lr{\rho,u_i}u_i=\psi^i u_i~~~~\forall \rho\in\mc{X},\label{expa}
\end{equation}
with $\lr{u_0,u_{1},u_2,...,u_{-1},u_{-2},...}$ an orthonormal basis of $\mc{X}$
and 
\begin{equation}
\lr{\rho,\sigma}=\int_{\Omega}\rho\sigma\,dV,~~~~\rho,\sigma\in\mc{X},
\end{equation}
the inner product on $\mc{X}$. 
Further assume that the evolution of $\rho$ is governed by an infinite dimensional
Hamiltonian system 
\begin{equation}
\frac{\p \rho}{\p t}=\mc{J}\frac{\delta H}{\delta \rho},\label{ftinf}
\end{equation}
with $\mc{J}:T^\ast\mc{X}\rightarrow T\mc{X}$ the Poisson operator and $H\in\mc{X}^\ast$ the Hamiltonian function. Now equation \eqref{ftinf} 
can be rewritten as an equivalent Hamiltonain system for  
the coefficients $\psi^i=\lr{\rho,u_i}$ of the expansion \eqref{expa}:
\begin{equation}
\frac{\p \psi}{\p t}=\hat{\mc{J}}\frac{\p H}{\p\psi}.\label{psit}
\end{equation}
Here, $\psi=\lr{\psi^0,\psi^1,\psi^2,...,\psi^{-1},\psi^{-2},...}$
is an infinite dimensional vector, $\hat{\mc{J}}$ the
associated Poisson operator (an infinite dimensional matrix depending on $\psi$), and $H=H\lr{\psi}$ the Hamiltonian as a function of the new variables $\psi$.
Notice that functional derivatives have been replaced by partial derivatives in \eqref{psit}. 
Then, the invariant measure of the system has the form
\begin{equation}
J dV= J d\psi^0\w d\psi^1\w d\psi^2 \w...\w d\psi^{-1}\w d\psi^{-2} \w ...~,
\end{equation}
where $J$ is given as the solution of
\begin{equation}
\p_i\lr{J\hat{\mc{J}}^{ij}}=0,~~~~j=0,1,2,...,-1,-2,...~.
\end{equation}
Performing a change of variables $\psi\rightarrow\theta$ such that
\begin{equation}
dV'=d\theta^0\w d\theta^1\w d\theta^2\w ...\w d\theta^{-1}\w d\theta^{-2}\w ...=JdV,
\end{equation}
define the distribution function $f=f\lr{\theta}$ of the new variables $\theta\lr{\psi}$
with respect to the invariant measure $dV'$. 
Then, the Fokker-Planck equation associated with the original system \eqref{ftinf}
can be written within the metriplectic formalism as
\begin{equation}
\frac{\p f}{\p t}=\lr{f,E,\Sigma}.
\end{equation}
Here, the metriplectic bracket $\lr{\circ,E,\Sigma}=\pb{\circ}{E}+\db{\circ}{\Sigma}$, the total energy $E$, and the constrained entropy $\Sigma$
are defined as in \eqref{PBM}, \eqref{DBM}, and \eqref{obs}. Notice that, however, all the quantities 
are written in terms of $f\lr{\theta}$, $H\lr{\theta}$, $\hat{\mc{J}}'\lr{\theta}$ (the Poisson operator in the new variables),
and $\hat{C}^k\lr{\theta}$, $k=1,2,...$, (the Casimir invariants of $\hat{\mc{J}}'$).
Below we provide an example of the procedure described above in the context of plasma physics.
We also remark that this approach is not restricted to orthonormal expansions,
but applies to discrete basis in general.

\subsection{Fokker-Planck equation for the Charney-Hasegawa-Mima equation}

The Charney-Hasegawa-Mima equation 
\begin{equation}
\lr{1-\Delta}\phi_t=\phi_x\lr{\Delta\phi_y+\lambda_y}-\phi_y\lr{\Delta\phi_x+\lambda_x},\label{CHM}
\end{equation}
is a nonlinear partial differential equation
describing the time evolution of a function $\phi=\phi\lr{x,y,t}$ in 2 dimensional space. 
The function $\lambda=\lambda\lr{x,y}$ is given and characterizes certain physical properties of the system.
Equation \eqref{CHM} occurs in the description of plasma turbulence, where the function $\phi$ 
represents the electric potential, and in the context of geophysical fluid dynamics,
where the function $\phi$ represents the stream function. 
The noncanonical Hamiltonian structure of \eqref{CHM} is known and can be found, for example, in \cite{Weinstein,Tassi}. 
Usually, the right-hand side of equation \eqref{CHM} is expressed in bracket notation
as $\left[\phi,\Delta\phi+\lambda\right]$, with $\left[\alpha,\beta\right]=\alpha_x\beta_y-\alpha_y\beta_x$
for any pair of differentiable functions $\alpha$ and $\beta$.
Here, we do not use such notation to avoid confusion with the dissipative bracket.

To simplify the calculations, we assume that $\lambda=cy$, with $c\in\mathbb{R}$.
This requirement, which implies that $\lambda$ varies uniformly in one direction, is often found in applications (see e.g. \cite{Swaters}).
Then, equation \eqref{CHM} reduces to
\begin{equation}
\lr{1-\Delta}\phi_t=\phi_x\lr{\Delta\phi_y+c}-\phi_y\Delta\phi_x.\label{CHM2}
\end{equation}
We look for periodic solutions of \eqref{CHM2} in both the $x$ and $y$ directions with period $2\pi$. 
The spatial domain of the function $\phi$ is taken to be $\Omega=\left[-\pi,\pi\right]^2$. 
Then, the decomposition \eqref{expa} can be obtained in terms of a Fourier series
\begin{equation}
\phi=\sum_{n,m=-\infty}^{+\infty}\phi^{nm}e^{i\lr{nx+my}}.\label{FS}
\end{equation} 
Substituting the Fourier series \eqref{FS} into \eqref{CHM2}, 
one obtains a system of equations for the Fourier coefficients $\phi_{nm}$:
\begin{equation}
\dot{\phi}^{nm}=\frac{1}{1+n^2+m^2}\sum_{p,q=-\infty}^{+\infty}\left\{icn\delta_{np}\delta_{mq}+\lr{mp-nq}\left[\lr{n-p}^2+\lr{m-q}^2\right]\phi^{n-p\,m-q}\right\}\phi^{pq}.\label{CHMF}
\end{equation}
Notice that
\begin{equation}
\sum_{p,q=-\infty}^{+\infty}\lr{mp-nq}\phi^{n-p\,m-q}\phi^{pq}=0.
\end{equation}
This is because each term in the summation corresponding to a given pair $\lr{p,q}$ 
cancels with the pair $\lr{p',q'}=\lr{n-p,m-q}$. Therefore, equation \eqref{CHMF} can be rewritten as
\begin{equation}
\dot{\phi}^{nm}=\frac{1}{1+n^2+m^2}\sum_{p,q=-\infty}^{+\infty}\left\{icn\delta_{np}\delta_{mq}+\lr{mp-nq}\left[1+\lr{n-p}^2+\lr{m-q}^2\right]\phi^{n-p\,m-q}\right\}\phi^{pq}.\label{CHMF}
\end{equation}
This form will be useful when proving the Jacobi identity.
The energy of the system is given by
\begin{equation}
H=\frac{1}{2}\int_{\Omega}\lr{\phi^2+\abs{\nabla\phi}^2}dxdy=2\pi^2\sum_{n,m=-\infty}^{+\infty}\lr{1+n^2+m^2}\abs{\phi^{nm}}^2.\label{HCHM}
\end{equation}
In the last passage we used the fact that, since $\phi$ is real, $\phi^{nm\ast}=\phi^{-n\,-m}$, with $\ast$ denoting complex conjugation.
It follows that
\begin{equation}
\frac{\p H}{\p \phi^{uv}}=
\begin{cases}
4\pi^2\lr{1+u^2+v^2}\phi^{uv\ast}~~~~&{\rm if}~~\lr{u,v}\neq\lr{0,0},\\
2\pi^2\phi^{00}~~~~&{\rm if}~~\lr{u,v}=\lr{0,0}.
\end{cases}
.\label{dHdphiuv}
\end{equation}
Substituting \eqref{dHdphiuv} into \eqref{CHMF} gives
\begin{equation}
\begin{split}
\dot{\phi}^{nm}=&\sum_{p,q=-\infty}^{+\infty}\frac{
icn\delta_{np}\delta_{mq}+\lr{mp-nq}\left[1+\lr{n-p}^2+\lr{m-q}^2\right]\phi^{n-p\,m-q}
}{4\pi^2\lr{1+n^2+m^2}\lr{1+p^2+q^2}}\frac{\p H}{\p \phi^{pq\ast}}\\
=&\sum_{p,q=-\infty}^{+\infty}\frac{
icn\delta_{n\,-p}\delta_{m\,-q}+\lr{nq-mp}\left[1+\lr{n+p}^2+\lr{m+q}^2\right]\phi^{n+p\,m+q}
}{4\pi^2\lr{1+n^2+m^2}\lr{1+p^2+q^2}}\frac{\p H}{\p\phi^{pq}}.\label{CHMF2}
\end{split}
\end{equation}
Next, define the vector with components $\phi^i=\phi^{i_1i_2}$ given by   
$\lr{\phi^0,\phi^1,\phi^2...}=\lr{\phi^{00},\phi^{01},\phi^{02},...}$.
Then, \eqref{CHMF2} can be cast in the form
\begin{equation}
\dot{\phi}^i=\mc{J}^{ij}H_j,
\end{equation}
where $H_j=\p H/\p\phi^j$, and we identified the candidate Poisson operator
\begin{equation}
\mc{J}^{ij}=\mc{B}^{ij}+\mc{C}^{ij}\phi^{i+j}.\label{JCHM}
\end{equation}
Here, $\mc{B}^{ij}$ and $\mc{C}^{ij}$ are constants depending on $i_1$, $i_2$, $j_1$, and $j_2$,
\begin{equation}
\mc{B}^{ij}=\frac{ici_1\delta_{i_1\,-j_1}\delta_{i_2\,-j_2}}{4\pi^2\lr{1+i_1^2+i_2^2}\lr{1+j_1^2+j_2^2}},~~~~\mc{C}^{ij}=\frac{
\lr{i_1j_2-i_2j_1}\left[1+\lr{i_1+j_1}^2+\lr{i_2+j_2}^2\right]}{4\pi^2\lr{1+i_1^2+i_2^2}\lr{1+j_1^2+j_2^2}},
\end{equation}
and $\phi^{i+j}=\phi^{n+p\,m+q}$.
The operator $\mc{J}^{ij}$ is antisymmetric. Indeed,
\begin{equation}
\begin{split}
\mc{J}^{ji}=&\frac{icj_1\delta_{j_1\,-i_1}\delta_{j_2\,-i_2}+\lr{j_1i_2-j_2i_1}\left[1+\lr{j_1+i_1}^2+\lr{j_2+i_2}^2\right]\phi^{j_1+i_1\,j_2+i_2}}{4\pi^2\lr{1+j_1^2+j_2^2}\lr{1+i_1^2+i_2^2}}\\
=&-\frac{ici_1\delta_{i_1\,-j_1}\delta_{i_2\,-j_2}+\lr{i_1j_2-i_2j_1}\left[1+\lr{i_1+j_1}^2+\lr{i_2+j_2}^2\right]\phi^{i_1+j_1\,i_2+j_2}}{4\pi^2\lr{1+i_1^2+i_2^2}\lr{1+j_1^2+j_2^2}}=-\mc{J}^{ij}
\end{split}
\end{equation}
The Jacobi identity \eqref{JI_m} reads
\begin{equation}
\begin{split}
\lr{\mc{B}^{im}+\mc{C}^{im}\phi^{i+m}}\mc{C}^{jk}\phi^{j+k}_m+\car=&
\lr{\mc{B}^{i\,j+k}+\mc{C}^{i\,j+k}\phi^{i+j+k}}\mc{C}^{jk}+\car\\=&
\lr{\mc{C}^{i\,j+k}\mc{C}^{jk}+\mc{C}^{j\,k+i}\mc{C}^{ki}+
\mc{C}^{k\,i+j}\mc{C}^{ij}}\phi^{i+j+k}\\&+\mc{B}^{i\,j+k}\mc{C}^{jk}+\mc{B}^{j\,k+i}\mc{C}^{ki}+\mc{B}^{k\,i+j}\mc{C}^{ij}.
\end{split}
\end{equation}
Observe that
\begin{equation}
\begin{split}
\mc{C}^{i\,j+k}\mc{C}^{jk}+\car=&
\frac{\left[i_1\lr{j_2+k_2}-i_2\lr{j_1+k_1}\right]\left[1+\lr{i_1+j_1+k_1}^2+\lr{i_2+j_2+k_2}^2\right]\lr{j_1k_2-j_2k_1}}{16\pi^4\lr{1+i_1^2+i_2^2}\lr{1+j_1^2+j_2^2}\lr{1+k_1^2+k_2^2}}+\car\\
=&\frac{1+\lr{i_1+j_1+k_1}^2+\lr{i_2+j_2+k_2}^2}{16\pi^4\lr{1+i_1^2+i_2^2}\lr{1+j_1^2+j_2^2}\lr{1+k_1^2+k_2^2}}
\{[i_1\lr{j_2+k_2}-i_2\lr{j_1+k_1}]\lr{j_1k_2-j_2k_1}\\
&+[j_1\lr{k_2+i_2}-j_2\lr{k_1+i_1}]\lr{k_1i_2-k_2i_1}+[k_1\lr{i_2+j_2}-k_2\lr{i_1+j_1}]\lr{i_1j_2-i_2j_1}\}=0.
\end{split}
\end{equation}
Similarly,
\begin{equation}
\begin{split}
\mc{B}^{i\,j+k}\mc{C}^{jk}+\car=&ic\frac{i_1\delta_{i_1\,-\lr{j_1+k_1}}\delta_{i_2\,-\lr{j_2+k_2}}\lr{j_1k_2-j_2k_1}}{16\pi^4\lr{1+i_1^2+i_2^2}\lr{1+j_1^2+j_2^2}\lr{1+k_1^2+k_2}}+\car\\=
&\frac{ic}{16\pi^4\lr{1+i_1^2+i_2^2}\left[1+\lr{i_1+k_1}^2+\lr{i_2+k_2}^2\right]\lr{1+k_1^2+k_2^2}}\{-i_1\left[\lr{i_1+k_1}k_2-\lr{i_2+k_2}k_1\right]\\&-\lr{i_1+k_1}\lr{k_1i_2-k_2i_1}-k_1\left[i_1\lr{i_2+k_2}-i_2\lr{i_1+k_1}\right]\}=0.
\end{split}
\end{equation}
We have thus shown that \eqref{JCHM} is a Poisson operator. 
This operator describes the Poisson algebra associated with 
periodic solutions of the Charney-Hasegawa-Mima equation.
Since \eqref{JCHM} is in noncanonical form, the invariant measure 
of the system is not immediately apparent.
We claim that the invariant measure is given by
\begin{equation}
dV=d\phi^0\w d\phi^1\w d\phi^2\w ...\,.\label{IMCHM}
\end{equation}
To see this, define the flow field
\begin{equation}
\dot{\bol{\phi}}=\dot{\phi}^i\p_{i}.
\end{equation}
Here, $\p_{i}$ denotes the $i$th tangent vector in the space of Fourier coefficients. We have
\begin{equation}
\mc{L}_{\dot{\bol{\phi}}}dV=\frac{\p\dot{\phi}^i}{\p\phi^i}\,dV=\p_{i}\lr{\mc{J}^{ij}}H_j\,dV=\sum_{ij}\mc{C}^{ij}\p_i\lr{\phi^{i+j}}H_j=0.
\end{equation}
In the last passage, we used the fact that $\p_i\phi^{i+j}=\delta_{j0}$ and $\mc{C}^{i0}=0$. 

Let $f=f\lr{\phi^0,\phi^1,\phi^2,...}$ denote the distribution function in the space of Fourier coefficients
defined with respect to the invariant measure \eqref{IMCHM}. 
Then, the Fokker-Planck equation for the Charney-Hasegawa-Mima equation can be written 
within the metriplectic formalism as
\begin{equation}
\frac{\p f}{\p t}=\lr{f,E,\Sigma}.
\end{equation}
Here, the metriplectic bracket $\lr{\circ,E,\Sigma}=\pb{\circ}{E}+\db{\circ}{\Sigma}$ is defined in terms of
the Poisson bracket \eqref{PBM} and the dissipative bracket \eqref{DBM} associated with   
the Poisson operator \eqref{JCHM}. Similarly, the total energy $E$ and the constrained entropy $\Sigma$ are
defined according to equation \eqref{obs} in terms of the distribution function $f\lr{\phi^0,\phi^1,\phi^2,...}$, the energy $H\lr{\phi^0,\phi^1,\phi^2,...}$, and the Casimir invariants $C^k\lr{\phi^0,\phi^1,\phi^2,...}$ of the Charney-Hasegawa-Mima equation.
The thermodynamic equilibrium of the system has the form \eqref{feq}. 
Let us evaluate the equilibrium distribution function 
for the case $c=0$ (implying $\lambda=0$).  
The Poisson operator \eqref{JCHM} admits the Casimir invariant
\begin{equation}
C=\frac{1}{2}\int_{\Omega}\left[\phi^2+2\abs{\nabla\phi}^2+\lr{\Delta\phi}^2\right]dxdy=2\pi^2\sum_{n,m=-\infty}^{+\infty}\lr{1+n^2+m^2}^2\abs{\phi^{nm}}^2.
\end{equation}
 Indeed,
\begin{equation}
\mc{J}^{ij}C_j=\sum_{j}\mc{C}^{ij}\phi^{i+j}C_j=
\sum_{j_1,j_2=-\infty}^{+\infty}\frac{\lr{i_1j_2-i_2j_1}\left[1+\lr{i_1+j_1}^2+\lr{i_2+j_2}^2\right]\lr{1+j_1^2+j_2^2}}{1+i_1^2+i_2^2}\phi^{i+j}\phi^{-j}=0.
\end{equation}
Here we used the fact that in the summation 
each term corresponding to a pair $\lr{j_1,j_2}$
cancels with the pair $\lr{j_1',j_2'}=\lr{-i_1-j_1,-i_2-j_2}$. 
Physically, $C$ is the sum of the energy $H$ of \eqref{HCHM} and the enstrophy
$\mc{E}=\frac{1}{2}\int_{\Omega}\left[\abs{\nabla\phi}^2+\lr{\Delta\phi}^2\right]dxdy$.
From \eqref{feq} we thus obtain the equilibrium distribution function
\begin{equation}
\lim_{t\rightarrow\infty}f=\frac{1}{Z}\exp{\left\{
-2\pi^2\left[\sum_{m,n=-\infty}^{+\infty}\lr{1+n^2+m^2}\lr{\beta+\mu\lr{1+n^2+m^2}}\abs{\phi^{nm}}^2\right]
\right\}}.\label{feqCHM}
\end{equation}
Setting $\alpha_{nm}=2\pi^2\lr{1+n^2+m^2}\left[\beta+\mu\lr{1+n^2+m^2}\right]$ and assuming $\beta,\mu\geq 0$, 
the normalization constant (partition function) $Z$ can be evaluated as
\begin{equation}
\begin{split}
Z=&\int\exp{\left\{-\alpha_{nm}\abs{\phi^{nm}}^2\right\}}d\phi^{00}\w d\phi^{01}\w d\phi^{01\ast}\w d\phi^{02}\w d\phi^{02\ast}\w ...\\
=&\sqrt{\frac{\pi}{\alpha_{00}}}\int{\exp{\left\{-2\alpha_{01}\abs{\phi^{01}}^2\right\}}}d\lr{i\theta^{01}}\w d\abs{\phi^{01}}^2\,
\int{\exp{\left\{-2\alpha_{02}\abs{\phi^{02}}^2\right\}}} d\lr{i\theta^{02}}\w d\abs{\phi^{02}}^2\, ...\\
=&\sqrt{\frac{\pi}{\alpha_{00}}}\left[\frac{\pi i}{\alpha_{01}}\int_{0}^{+\infty}e^{-x}dx\right]\left[\frac{\pi i}{\alpha_{02}}\int_{0}^{+\infty}e^{-x}dx\right]...\\
=&\sqrt{\frac{\pi}{\alpha_{00}}}\left[\frac{\pi i}{\alpha_{01}}\right]\left[\frac{\pi i}{\alpha_{02}}\right]...\\
=&\sqrt{\frac{\pi}{\alpha_{00}}}\sqrt{\frac{\alpha_{00}}{\pi i}}\prod_{n,m=-\infty}^{+\infty}\sqrt{\frac{\pi i}{\alpha_{nm}}}\\
=&\frac{1}{\sqrt{i}}\prod_{n,m=-\infty}^{+\infty}\sqrt{\frac{i}{2\pi\lr{1+m^2+n^2}\left[\beta+\mu\lr{1+n^2+m^2}\right]}}.
\end{split}
\end{equation}
Here, we used the fact that $\phi^{nm\ast}=\phi^{-n\,-m}$ (hence, $\phi^{00}$ is real), 
the polar representation $\phi^{nm}=\abs{\phi^{nm}}e^{i\theta^{nm}}$, and 
the property $\alpha_{nm}=\alpha_{-n\,-m}$.

Notice that, if $C$ is a Casimir invariant, so is any function $\nu=\nu\lr{C}$ of $C$.
The question then arises on how to determine the Casimir invariant appearing in
the exponent of the equilibrium distribution function \eqref{feq} without ambiguity.
When solving the Fokker-Planck equation, this function is determined 
automatically by the relaxation process as a consequence of the initial conditions of the system: 
each element of the ensemble preserves the original value of $C$ throughout entropy maximization. 
However, when we deal directly with the equilibrium distribution function 
without prior knowledge of the initial configuration of the system,
the form of the exponent $\mu C$ is postulated (as a function of $C$),  
and the constant $\mu$ is treated as the chemical
potential of a grand canonical ensemble.  

Finally, the basis for the decomposition of the dynamical variables does not need to be
that of the Fourier series. For example, a combination of 
eigenvectors of the curl operator (Beltrami fields) and 
harmonic vector fields can be used to identify 
the invariant measure associated with the evolution of the magnetic field in MHD theories (see \cite{Ito}).
Once the Poisson operator and the invariant measure are known, 
the Fokker-Planck equation and the corresponding equilibrium distribution function can be obtained within the metriplectic formalism
as in the example discussed above.

\section{Metriplectic Manifolds}

The purpose of the present section is to discuss some geometric aspects of dissipative dynamics as described by the metriplectic formalism.
Consider a 2 dimensional canonical Hamiltonian system.
The fundamental geometric structure associated with it is the
symplectic 2 form
\begin{equation}
\omega= dp\w dq=dp\otimes dq-dq\otimes dp,\label{o2}
\end{equation} 
where $\lr{p,q}\in\Omega_c$ are the phase space coordinates
and $\Omega_c$ 2 dimensional phase space. 
Since \eqref{o2} is invertible, 
the Poisson operator $\mc{J}$
corresponds to the inverse of $\omega$:
\begin{equation}
\mc{J}=\p_q \w \p_p=\p_q\otimes\p_p-\p_p\otimes\p_q.\label{J2}
\end{equation}
In matrix representation,
\begin{equation}
\omega=\begin{bmatrix}0&1\\-1&0\end{bmatrix},~~~~\mc{J}=\begin{bmatrix}0&-1\\1&0\end{bmatrix}.
\end{equation} 
Then, the Poisson bracket \eqref{pbm} takes the canonical form
\begin{equation}
\pb{\alpha}{\beta}_{\rm m}=\alpha_q\beta_p-\alpha_p\beta_q,~~~~\alpha,\beta\in C^{\infty}\lr{\Omega_c},
\end{equation}
Let $g$ be the twice contravariant tensor with components
\begin{equation}
g^{ij}=\mc{J}^{ik}\mc{J}^{jk},~~~~i,j=1,...,n,
\end{equation}
appearing in the definition of the dissipative bracket \eqref{dbm}. 
When $\mc{J}$ is given by \eqref{J2}, we have
\begin{equation}
g=\p_p^2+\p_q^2=\p_p\otimes \p_p+\p_q\otimes\p_q,
\end{equation}
or, in matrix form, 
\begin{equation}
g=-\mc{J}^2=I=\begin{bmatrix}1&0\\0&1\end{bmatrix},
\end{equation}
where $I$ is the 2 dimensional identity matrix.
Thus, the `canonical' form for the dissipative bracket \eqref{dbm} is
\begin{equation}
\db{\alpha}{\beta}_{\rm m}=\alpha_p\beta_p+\alpha_q\beta_q,~~~~\alpha,\beta\in C^{\infty}\lr{\Omega_c}. 
\end{equation}
The inverse of the tensor $g$  
defines a twice covariant non-degenerate symmetric tensor  
\begin{equation}
\mf{g}=dp^2+dq^2=dp\otimes dp+dq\otimes dq.\label{symmt}
\end{equation}
Evidently, $\mf{g}$ is a metric tensor on $\Omega_c$. 
This calculation suggests that the essential 
geometric structure associated with dissipative dynamics
as described by \eqref{dbm} is the `canonical metric tensor' \eqref{symmt}.
Recalling \eqref{Xdotbm}, the change $\dot{H}_d$ in energy $H\lr{p,q}$ due
to dissipation can be evaluated as
\begin{equation}
\dot{H}_d=-\db{H}{\gamma H}_{\rm m}=-\gamma \lr{H_p^2+H_q^2}=-\gamma\lr{\dot{p}^2+\dot{q}^2}.
\end{equation}
Hence, energy dissipation is proportional to the square of (unperturbed) phase space velocity. 

When the hypothesis of the Lie-Darboux theorem discussed in section 2
are verified, a similar result applies in dimensions greater than 2.
In particular, it is possible to locally express the tensor $g$ as
\begin{equation}
g=\sum_{i=1}^{r}\p_{p^i}\otimes\p_{p^i}+\p_{q^i}\otimes\p_{q^i}.
\end{equation}
and the tensor $\mf{g}$ as  
\begin{equation}
\mf{g}=\sum_{i=1}^{r}dp^{i}\otimes dp^i+dq^i\otimes dq^i.
\end{equation}
Notice that, however, both $g$ and $\mf{g}$ are degenerate in general,
and $\mf{g}$ defines a metric tensor only over the submanifold 
$\Omega_C=\left\{\bol{x}\in\Omega~\rvert~ C^1=c^1,...,C^m=c^m\right\}$ with $c^1,...,c^m\in\mathbb{R}$. 

We conclude this section with a list of the main geometric constructions
that occur in the algebraic formulation of dissipative dynamics
discussed in the present paper. 
We restrict our attention to finite dimensions.

Let $\Omega\subset\mathbb{R}^n$ denote a smoothly bounded domain 
with boundary $\p\Omega$. 

\begin{mydef} 
An antisymmetric bracket on $\Omega$ is a bilinear map 
over the field of real numbers
\begin{equation}
\left\{\cdot,\cdot\right\}:C^{\infty}\lr{\Omega}\times C^{\infty}\lr{\Omega}\rightarrow C^{\infty}\lr{\Omega},
\end{equation}
such that
\begin{subequations}
\begin{align}
&\left\{f,f\right\}=0,\\
&\left\{f,g\right\}=-\left\{g,f\right\},\\
&\left\{fg,h\right\}=f\left\{g,h\right\}+\left\{f,h\right\}g,
\end{align}
\end{subequations}
for all $f,g,h\in C^{\infty}\lr{\Omega}$.
\end{mydef}
The antisymmetric bracket is the structure required for conservation of energy.

\begin{mydef}
A Liouville or measure preserving bracket on $\Omega$ is an antisymmetric bracket $\left\{\cdot,\cdot\right\}$ on $\Omega$ such that
\begin{equation}
\left\{f,g\right\}={\rm div}\lr{f X_g},\label{Lb}
\end{equation} 
for all $f,g\in C^{\infty}\lr{\Omega}$. Here, $X_g=\left\{x^i,g\right\}\p_i$ is the flow generated by $g$, $\lr{x^1,...,x^n}$ a coordinate system in $\Omega$, and the divergence is calculated
with respect to the volume element $JdV=Jdx^1\w ... \w dx^n$.
\end{mydef}

The Liouville bracket is the structure required to formulate
the Fokker-Planck equation \eqref{FPE1}. 
Indeed, it assigns an invariant measure, which is needed for
the formulation of statistical mechanics. 
To see this, suppose that the Liouville bracket has the form
$\left\{f,g\right\}=f_i\mc{J}^{ij}g_j$ for some bivector field $\mc{J}$. 
Then, in the coordinate system $\lr{x^1,...,x^n}$
with volume element $JdV$ equation \eqref{Lb} reads as  
\begin{equation}
\left\{f,g\right\}={\rm div}\lr{fX_g}=\frac{1}{J}\frac{\p}{\p x^i}\lr{Jf\mc{J}^{ij}g_j}=
\left\{f,g\right\}+\frac{f}{J}\p_i\lr{J\mc{J}^{ij}}g_j~~~~\forall f,g.
\end{equation}
It follows that
\begin{equation}
\p_i\lr{J\mc{J}^{ij}}=0,~~~~j=1,...,n.
\end{equation}
This is exactly the condition \eqref{IM} for
the existence of an invariant measure $JdV$ 
regardless of the choice of the Hamiltonian function. 

\begin{mydef}
A Poisson bracket on $\Omega$ is an antisymmetric bracket $\left\{\cdot,\cdot\right\}$ on $\Omega$ such that
\begin{equation}
\left\{f,\left\{g,h\right\}\right\}+\car=0,
\end{equation}
for all $f,g,h\in C^{\infty}\lr{\Omega}$.
\end{mydef}

The Poisson bracket characterizes the algebraic structure of Hamiltonian systems. 
Due to the Lie-Darboux and Liouville's theorems,  
a Poisson bracket $\left\{f,g\right\}=f_i\mc{J}^{ij}g_j$ 
is locally a Liouville bracket in regions where the rank of $\mc{J}$ is constant.

\begin{mydef}
A dissipative bracket on $\Omega$ is a bilinear map over the field of real numbers
\begin{equation}
\left[\cdot,\cdot\right]:C^{\infty}\lr{\Omega}\cp C^{\infty}\lr{\Omega}\rightarrow C^{\infty}\lr{\Omega},
\end{equation}
such that 
\begin{subequations}
\begin{align}
&\db{f}{f}\geq 0,\\
&\db{f}{g}=\db{g}{f},\\
&\db{fg}{h}=f\db{g}{h}+\db{f}{h}g,\\
\end{align}
\end{subequations}
for all $f,g,h\in C^{\infty}\lr{\Omega}$.
\end{mydef}

The metriplectic bracket is thus obtained by combining a Poisson bracket with a dissipative bracket.
\begin{mydef}
A metriplectic bracket on $\Omega$ is a map
\begin{equation}
\lr{\cdot,\cdot,\cdot}:C^{\infty}\lr{\Omega}\cp C^{\infty}\lr{\Omega}\cp C^{\infty}\lr{\Omega}\rightarrow C^{\infty}\lr{\Omega},
\end{equation}
such that 
\begin{equation}
\lr{f,g,h}=\pb{f}{g}+\db{f}{h},
\end{equation}
for all $f,g,h\in C^{\infty}\lr{\Omega}$ and where $\pb{\cdot}{\cdot}$ is a Poisson bracket on $\Omega$
and $\db{\cdot}{\cdot}$ a dissipative bracket on $\Omega$.
\end{mydef} 
Observe that a Poisson bracket $\pb{f}{g}=f_i\mc{J}^{ij}g_j$ naturally induces a dissipative bracket
$\db{f}{g}=f_i\mc{J}^{ik}\mc{J}^{jk}g_j$ and an associated metriplectic bracket 
$\lr{f,g,h}=f_i\mc{J}^{ij}g_j+f_i\mc{J}^{ik}\mc{J}^{jk}h_k$.  
In other words, a Poisson manifold $\lr{\Omega,\mc{J}}$ is also a metriplectic manifold $\lr{\Omega,\mc{J},g}$ with $g=\mc{J}^{ik}\mc{J}^{jk}\p_i\otimes \p_j$. We refer to a metriplectic manifold induced by a Poisson manifold as a natural
metriplectic manifold.


\section{Concluding Remarks}

In this paper, we constructed the metriplectic bracket
that generates the Fokker-Planck equation associated with
diffusion processes in Hamiltonian systems.
Such macroscopic bracket is induced from the microscopic
stochastic dynamics of the particles that populate the statistical ensemble.
We found that, on the invariant measure assigned by
Liouville's theorem, the dissipative bracket is fully determined
by the microscopic Poisson operator, and exhibits a double bracket form.
This result suggests a canonical form of the dissipative bracket,
which is characterized by an Euclidean metric tensor  
on the Casimir leaves spanned by the canonical coordinates provided
by the Lie-Darboux theorem.
As an application to the statistics of infinite dimensional Hamiltonian systems, 
we discussed the Fokker-Planck formalism for the Charney-Hasegawa-Mima equation.

We remark that, while the theory discussed in this study
applies to general Hamiltonian systems, 
the resulting Fokker-Planck equation does not necessarily
correspond to standard equations that incorporate dissipative effects.  
For example, if one implements the present contruction to the ideal Euler equations, 
the Fokker-Planck equation will introduce dissipation while
respecting the relevant topological invariants of the fluid flow such as helicity. This setting is clearly different from that of the Navier-Stokes equations with finite viscosity. 
Therefore, the applicability of the theory developed here is contingent upon careful considerations on the kind of relaxation process that is physically relevant for the system of interest.

\section*{Acknowledgments}

\noindent The research of N. S. was supported by JSPS KAKENHI Grant No. 18J01729.
          N. S. is grateful to Professor P. J. Morrison for useful discussion   
					on noncanonical Hamiltonian systems and the metriplectic formalism of dissipative dynamics, to Professor M. Yamada for useful discussion on the Charney-Hasegawa-Mima equation, and to Professor Z. Yoshida for useful criticism on the statistical mechanics of constrained systems.  
					
\appendix

\section{Lie-Darboux Theorem}

\begin{theorem}
Let $\omega$ denote a closed smooth $2$ form of rank $2r=n-m$ in a domain $\Omega\subset\mathbb{R}^{n}$. Then, for every point $\bol{x}\in\Omega$
there exist a neighborhood $U$ of $\bol{x}$ and local
coordinates $\lr{p^1,...,p^r,q^1,...,q^r,C^1,...,C^m}$ such that
\begin{equation}
\omega=\sum_{i=1}^{r}dp^i\w dq^i~~~~{\rm in}~~U.\label{LD}
\end{equation}
\end{theorem}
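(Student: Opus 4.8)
The plan is to prove the statement by induction on $r$, at each step splitting off a single canonically paired coordinate pair $\lr{p^1,q^1}$ and thereby reducing to a closed $2$ form of rank $2\lr{r-1}$. The base case $r=0$ is immediate: rank zero means $\omega\equiv 0$ and $m=n$, so any local coordinate system $\lr{C^1,\dots,C^m}$ realizes \eqref{LD} with an empty sum. For the inductive step I assume the theorem for closed $2$ forms of rank $2\lr{r-1}$ and take $\omega$ of constant rank $2r\geq 2$ near $\bol{x}$.

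First I would produce the first conjugate pair. Because the rank of $\omega$ is constant, the bundle map $v\mapsto i_v\omega$ has image equal to the annihilator $\lr{\ker\omega}^{\circ}$ of the kernel distribution, a subbundle of $T^{\ast}\Omega$ of rank $2r$; this is precisely where the constant-rank hypothesis enters, since it guarantees that a function $q^1$ with $dq^1\in\lr{\ker\omega}^{\circ}$ and $dq^1\lr{\bol{x}}\neq 0$ admits a vector field $P_1$ with $i_{P_1}\omega=dq^1$ throughout a neighborhood, and $P_1\lr{\bol{x}}\neq 0$. Straightening $P_1$ by the flow-box theorem furnishes a function $p^1$ with $P_1 p^1=1$; since $P_1\notin\ker\omega$, some covector in $\lr{\ker\omega}^{\circ}$ is nonzero on $P_1$, so I can arrange in addition that $dp^1\in\lr{\ker\omega}^{\circ}$, whence a second field $Q_1$ with $i_{Q_1}\omega=dp^1$ also exists. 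Using $\mf{L}_{P_1}\omega=d\,i_{P_1}\omega+i_{P_1}d\omega=0$ and the analogous identity for $Q_1$, a direct contraction then yields the duality relations $P_1 p^1=1$, $P_1 q^1=\omega\lr{P_1,P_1}=0$, $Q_1 p^1=\omega\lr{Q_1,Q_1}=0$, and $Q_1 q^1=\omega\lr{P_1,Q_1}=-1$. In particular $dp^1,dq^1$ are independent and $P_1,Q_1$ are transverse to the level set $N=\left\{p^1=\const,\ q^1=\const\right\}$ through $\bol{x}$.

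Next I set $\tilde\omega=\omega-dp^1\w dq^1$. The relations above give $i_{P_1}\tilde\omega=i_{Q_1}\tilde\omega=0$, so $P_1,Q_1\in\ker\tilde\omega$; since moreover $\ker\omega\subset\ker\tilde\omega$, the closed form $\tilde\omega$ has rank at most $2r-2$, and a short computation confirms equality. Restricting to the transversal $N$, where $dp^1$ and $dq^1$ vanish, I obtain a closed $2$ form $\omega''=\tilde\omega|_N=\omega|_N$ of rank $2\lr{r-1}$ on the $\lr{n-2}$ dimensional manifold $N$, to which the induction hypothesis applies and produces coordinates $\lr{p^2,\dots,p^r,q^2,\dots,q^r,C^1,\dots,C^m}$ with $\omega''=\sum_{i=2}^{r}dp^i\w dq^i$ on $N$. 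I then extend these functions off $N$ by requiring them to be constant along the flows of $P_1$ and $Q_1$; together with $p^1,q^1$ they form a local coordinate system, and since their differentials annihilate $P_1,Q_1\in\ker\tilde\omega$, closedness of $\tilde\omega$ propagates the identity $\tilde\omega=\sum_{i=2}^{r}dp^i\w dq^i$ from $N$ to the whole neighborhood. Adding back $dp^1\w dq^1$ gives \eqref{LD}.

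I expect the main obstacle to be the existence of the Hamiltonian fields $P_1$ and $Q_1$ in the genuinely degenerate case: unlike the nondegenerate Darboux theorem, the map $v\mapsto i_v\omega$ is not invertible, so one must verify that $dq^1$ and $dp^1$ can be kept inside the rank-$2r$ subbundle $\lr{\ker\omega}^{\circ}$ while still satisfying $P_1 p^1=1$, which is exactly what the constant-rank assumption secures. The second delicate point is the final extension step, where closedness of $\tilde\omega$ must be used to show the canonical form holds not merely on the transversal $N$ but on a full neighborhood of $\bol{x}$.
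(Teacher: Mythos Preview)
Your inductive strategy is a legitimate route to the degenerate Darboux theorem, and it differs from the paper's argument, which first uses the Frobenius theorem to split off the Casimir directions $\lr{C^1,\dots,C^m}$ all at once and then invokes the standard nondegenerate Darboux theorem on the leaves. However, your proof contains a genuine gap at exactly the point you flag as the ``main obstacle,'' and the resolution you offer is not correct.

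The issue is the \emph{existence} of the function $q^1$ (and later $p^1$) with $dq^1\in\lr{\ker\omega}^{\circ}$ and $dq^1\lr{\bol{x}}\neq 0$. You write as though such a function is automatically available and that the constant-rank hypothesis is what ``secures'' the construction, but constant rank only guarantees that $\lr{\ker\omega}^{\circ}$ is a smooth subbundle of $T^{\ast}\Omega$ and that any section of it lifts to a vector field under $v\mapsto i_v\omega$. It does \emph{not} guarantee that this subbundle contains a nonzero \emph{exact} $1$ form. A generic rank-$2r$ subbundle of $T^{\ast}\Omega$ contains no exact forms at all; the existence of one is precisely the statement that the distribution $\ker\omega$ admits a local first integral, i.e.\ that $\ker\omega$ is integrable. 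This is where the closedness hypothesis $d\omega=0$ must be used: for $X,Y\in\ker\omega$ one has $i_{[X,Y]}\omega=\mf{L}_X\lr{i_Y\omega}-i_Y\mf{L}_X\omega=-i_Y\lr{di_X\omega+i_Xd\omega}=0$, so $\ker\omega$ is involutive, and then Frobenius supplies the function $q^1$. The same obstruction recurs when you assert that $p^1$ can be ``arranged'' to satisfy $dp^1\in\lr{\ker\omega}^{\circ}$: the flow-box coordinate produced by straightening $P_1$ has no reason to annihilate $\ker\omega$, and correcting it again requires a first integral of $\ker\omega$.

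Once you insert the involutivity argument and invoke Frobenius to produce $q^1$, you have essentially performed the paper's first step; at that point it is more economical to follow the paper and take all $2r$ Frobenius coordinates at once, reducing directly to the nondegenerate Darboux theorem on the leaf, rather than peeling off pairs one at a time. Your inductive scheme can be made to work, but it does not avoid Frobenius, and the extension step in your final paragraph also needs the additional observation that $[P_1,Q_1]\in\ker\omega$, so that the flows used to propagate coordinates off $N$ interact correctly with $\tilde\omega$.
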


\begin{proof}
Let $\lr{x^1,...,x^n}$ denote a Cartesian coordinate system in $\Omega$.
The kernel of $\omega$ is spanned by $m$ smooth orthonormal tangent vectors 
$\bol{\xi}_i=\xi_i^{j}\p_j\in T\Omega$, $i=1,...,m$. 
Let $\lr{\bol{\theta}_1,...,\bol{\theta}_{2r},\bol{\xi}_1,...,\bol{\xi}_m}$  
denote an orthonormal basis of smooth tangent vectors in $\Omega$.  
To each $\bol{\xi}_i$ and $\bol{\theta}_i$ we assign the cotangent vectors $\xi_i=\sum_{j=1}^{n}\xi_i^{j} dx^j\in T^\ast\Omega$ and $\theta_i=\sum_{j=1}^n\theta_{i}^jdx^j\in T^\ast\Omega$. Then, $\lr{\theta_1,...,\theta_{2r},\xi_1,...,\xi_m}$ forms a smooth basis of the cotangent bundle $T^\ast\Omega$ such that
\begin{equation}
\omega=\sum_{i<j}\alpha_{ij}\theta_i\w\theta_j.
\end{equation}
The 2 form $\omega$ is closed. This implies
\begin{equation}
\sum_{i<j}d\alpha_{ij}\w\theta_i\w\theta_j+\sum_{ij}\alpha_{ij}d\theta_i\w\theta_j=0.
\end{equation}
Multiplying this expression by the $2r-1$ form $\theta_k^{2r-1}=\theta_1\w ... \w\theta_{k-1}\w\theta_{k+1}\w...\w\theta_{2r}$ we obtain
\begin{equation}
\sum_{i}\alpha_{ik}\theta_1\w ...\w\theta_{k-1}\w\theta_k\w\theta_{k+1}\w ...\w\theta_{2r}\w d\theta_{i}=0. \label{Frob0}
\end{equation} 
Since by hypothesis $\omega$ has rank $2r$, the matrix $\alpha_{ik}$ is invertible with inverse $\lr{\alpha^{-1}}^{kj}$. 
Multiplying the left-hand side of \eqref{Frob0} by $\lr{\alpha^{-1}}^{kj}$ and summing over $k$ gives
\begin{equation}
\theta_1\w ...\w\theta_{2r}\w d\theta_j=0,~~~~j=1,...,2r.\label{Frob1}
\end{equation}
System \eqref{Frob1} is the Frobenius integrability condition \cite{Frankel} for
the kernel of $\omega$, 
\begin{equation}
{\rm ker}\lr{\omega}=\left\{X\in T\Omega~\rvert~i_X\theta_i=0~\forall i=1,...,2r\right\}.
\end{equation}
Hence, for each $\bol{x}\in\Omega$, there exists a neighborhood  
$V\subset\Omega$ of $\bol{x}$ and local coordinates
$\lr{y^1,...,y^{2r},C^1,...,C^m}$ such that the submanifolds $y^1={\rm const.}, ...,y^{2r}={\rm const.}$ are integral manifolds of ${\rm ker}\lr{\omega}$
and the cotangent vectors $\theta_i$ take the form $\theta_i=\tau_{ij}dy^j$  for some smooth coefficients $\tau_{ij}=\tau_{ij}\lr{y^1,...,y^{2r},C^1,...,C^m}$, $i,j=1,...,2r$. It follows that 
\begin{equation}
\omega=\sum_{i<j}A_{ij}dy^i\w d y^j~~~~{\rm in}~~V,\label{omegay}
\end{equation}
for some smooth coefficients $A_{ij}=A_{ij}\lr{y^1,...,y^{2r},C^1,...,C^m}$, $i,j=1,...,2r$. 
Furthermore, using the condition $d\omega=0$ with equation \eqref{omegay}, one obtains $\p A_{ij}/\p C^{k}=0$, $k=1,..,m$,  
which implies $A_{ij}=A_{ij}\lr{y^1,...,y^{2r}}$, $i,j=1,...,2r$.  
Thus, the problem is now reduced to the standard non-degenerate case 
on the $2r$ dimensional submanifold $V_C=\left\{\bol{x}\in\Omega~\rvert~C^1={\rm const.},...,C^{m}={\rm const.}\right\}$
where the $2$ form $\omega$ has full rank $2r$. The proof can be obtained accordingly (see \cite{Arnold,deLeon}). 
\end{proof}

\end{document}